\newcommand{\trans}{\mathfrak{T}}
\newcommand{\inver}{\mathfrak{i}}
\def\eref#1{(\ref{#1})}
\newcommand{\pres}[1]{\langle{#1}\rangle}
\newcommand{\presm}[1]{\pres{{#1}}_{+}}
\def\cH{{\mathcal H}}
\def\m{\mu}
\def\s{\sigma}
\def\FR{\mathbf{FR}}
\def\SK{\mathbf{automgrp}}
\def\GAP{\mathbf{GAP}}
\def\resp{\hbox{\textit{resp.}} }
\def\Z{\mathbb{Z}}
\def\N{\mathbb{N}}
\newcommand{\StackFourLabels}[4]{%
   \renewcommand{\arraystretch}{0.75}%
   \begin{array}{c}#1\\ #2 \\ #3 \\ #4 \end{array}%
   \renewcommand{\arraystretch}{1.333}}
\newcommand{\StackEightLabels}[8]{%
   \renewcommand{\arraystretch}{0.75}%
   \begin{array}{c}#1\\ #2 \\ #3 \\ #4 \\ #5 \\ #6 \\ #7 \\ #8 \end{array}%
   \renewcommand{\arraystretch}{1.333}}
\def\mz{{\mathfrak m}}
\def\dz{{\mathfrak d}}
\def\iz{{\mathfrak i}}
\newcommand{\Ac}{\mathcal{A}}
\newcommand{\Bc}{\mathcal{B}}
\newcommand{\id}[1][]{\text{id}_{#1}}
\newcommand{\G}{\grEng{{\cal A}}}
\newcolumntype{E}{!{\vrule width 1.3pt}}
\def\neg{\!\;\!}
\def\Call{\mathbf{W}}
\def\Ci{\mathbf{I}}
\def\Cir{\mathbf{I\neg R}}
\def\Cbir{\mathbf{B\neg I\neg R}}
\def\Cjir{\mathbf{J\neg I\neg R}}
\def\Cijir{\mathbf{\iz J\neg I\neg R}}
\def\Cdijir{\mathbf{\dz\iz J\neg I\neg R}}
\def\Cji{\mathbf{J\neg I}}
\def\Cdji{\mathbf{\dz J\neg I}}
\def\Cnot{\mathbf{N}}
\def\us{\char`\_}
\def\col{6mm}
\def\colcol{8mm}
\def\colcolcol{10mm}
\newtheorem{theorem}{Theorem}[section]
\newtheorem{proposition}[theorem]{Proposition}
\newtheorem{definition}[theorem]{Definition}
\newtheorem{corollary}[theorem]{Corollary}
\newtheorem{lemma}[theorem]{Lemma}
\newenvironment{proof}{{\bf Proof.}}%
{\hspace*{\fill}\(\square\)}
\begin{document}

\title{On the Finiteness Problem for Automaton (Semi)groups}

 \author{Ali Akhavi\thanks{GREYC - CNRS UMR 6072 \& Universit\'e de
     Caen, France} \and Ines Klimann\thanks{LIAFA - CNRS UMR
     7089 \& Universit\'e Paris Diderot-Paris~7, France}
   \and Sylvain Lombardy\thanks{LIGM - CNRS UMR 8049 \& Universit\'e
     Paris Est, France} \and Jean Mairesse\footnotemark[2] \and
   Matthieu Picantin\footnotemark[2]}

\date{\today}

\maketitle

\begin{abstract}
This paper addresses a decision problem highlighted by~Grigorchuk,
Nekrashevich, and~Sushchanski{\u\i}, namely the finiteness problem for
automaton (semi)groups. 
For semigroups, we give an effective sufficient but not necessary condition for
finiteness and, for groups, an effective necessary but not sufficient
condition. The efficiency of the new criteria is demonstrated
by testing all Mealy automata with small stateset and alphabet.
Finally, for groups, we provide a necessary and sufficient
condition that does not directly lead to a decision procedure.  
\end{abstract}

\section{Introduction}\label{s:intro}

\emph{Automaton (semi)groups} --- short for semigroups generated by
Mealy automata or groups generated by invertible Mealy automata ---
were formally introduced a half century ago (for details, see~\cite{clas32} and references therein). 
Two decades later, important results have started revealing their full potential.
In particular, contributing to the Burnside problem, \cite{aleshin,grigorchuk1} construct
Mealy automata generating particularly simple infinite torsion groups,
and, answering to the Milnor problem, \cite{brs,grigorchukMilnor}
produce Mealy automata generating the first examples of (semi)groups with intermediate growth. 

Since these pioneering works, a substantial theory continues to develop using various methods, ranging from finite
automata theory to geometric group theory, and various viewpoints from
self-similarity to natural actions on regular rooted trees
(see~\cite{bgn,bgs,bs,clas32,gns,gsu,nek} for groups
and~\cite{brs,cain,gns,mal,min,sst} for semigroups) 
and never ceases to show that automaton (semi)groups possess multiple interesting and
sometimes unusual features. 

The classical decision problems have been investigated for automaton groups and semigroups: the
word problem is solvable \cite{cain,gns} while the conjugacy
problem has recently been proved to be unsolvable \cite{conjugacy}.
Here we address the \emph{finiteness problem}, that is, the question of 
the existence of an algorithm that takes as input a Mealy
automaton and decides if the generated (semi)group is
finite (see~\cite[Problem~7.2.1(b)]{gns}). 
Since the word problem is solvable, then
a semidecision procedure for the finiteness problem simply consists 
of enumerating all the elements.

Three results related to the finiteness problem have to be mentioned here. First, the finiteness problem is solved for the special class of semigroups generated by (dual) Cayley machines~(see~\cite{cain,mal,min,sst}) by using semigroup theory and especially the Green's relations machinery.
Second, the class of those automata which always generate finite (semi)groups independently of their output function has been completely characterized (see~\cite{anto,antoberk,russ}).
Third, the class of so-called ``bounded'' (invertible) automata where all the states have growth degree at most~0 has been thoroughly studied
and the solution to the order problem (see~\cite{sidkiconjugacy,sidki}) yields an infiniteness criterion.
Observe that these three classes correspond to very special structures for the concerned Mealy automata.

Two $\GAP$ packages are dedicated to
automaton (semi)groups: $\FR$ by~Bartholdi and~$\SK$ by Muntyan
and~Savchuk \cite{FR,GAP4,sav}. Both include specific (in)finiteness
tests. 
Besides the three results above-mentioned, all that was known up to now
about the finiteness 
question for automaton groups happened to be somehow summarized in the
documentation for~$\FR$:

\begin{quote}\small
The order of [an automaton] group is computed as follows: if all [the
states have growth degree at most~0], then enumeration will succeed
in computing the order. If the action of the group is primitive, and
it comes from a bireversible automaton, then the Thompson-Wielandt
theorem is tested against [$\ldots$]
see~\cite[Prop.~2.1.1]{BM}. Then, $\FR$ attempts to find whether
the group is level-transitive (in which case it would be
infinite). Finally, it attempts to enumerate the group's elements,
testing at the same time whether these elements have infinite order.

\nobreak\noindent\medskip
Needless to say, none except the first few steps are guaranteed to succeed.
\end{quote}

\smallskip\noindent In this paper, we give several new criteria for testing (in)finiteness,
that could easily be added to the $\FR$ and $\SK$ packages. 
The original ingredients in these packages mainly come from geometric group
theory. Our new notions and tools --- like \emph{helix
  graphs} and \emph{minimization-dualization} --- are automata-theoretic in
nature and most often work in the general setting of semigroups.
The common idea is to put a special emphasis on the \emph{dual automaton},
obtained by exchanging the roles of stateset and alphabet.
 The stepping stone is Proposition~\ref{pr:duale-finitude} stating that any
 Mealy automaton generates a finite semigroup if and only
 if so does its dual. 
The general strategies vary by analyzing a Mealy automaton and
its dual either alternatively --- see the minimization-dualization
reduction in Section~\ref{s:reduction} --- or both together as a
whole --- see the helix graph construction in~Section~\ref{s:helix}.

In Section~\ref{s:reduction}, we give an effective sufficient but not
necessary condition for finiteness using minimization-dualization. 
Focusing on those invertible automata with invertible dual, and using helix
graphs, Section~\ref{s:helix} provides an effective necessary but not sufficient
condition for finiteness, and also a non-effective necessary and
sufficient condition. The decidability of the finiteness problem
remains open. 

\smallskip

Gathering the new criteria with
the previously known ones allows to decide the (semi)group (in)finiteness
for substantially more Mealy automata. In Table~\ref{tbl-intro}, we report
on the 
results of the experimentation carried out on: $(i)$ all 3-letter 2-state Mealy automata; $(ii)$ all 3-letter 3-state invertible or reversible 
Mealy automata. The first three columns are the number of automata treated
successfully respectively
by previously known criteria, our new criteria, and the union of both. The last column is
the total number of Mealy automata. The automata are counted up to isomorphism. 

\begin{table}[ht]
\centering
\caption{Some results of the experimentations to decide (in)finiteness with old and new criteria.\label{tbl-intro}}
{\begin{tabular}{lE>{\centering }m{24mm}|>{\centering }m{18mm}|>{\centering }m{20mm}E>{\centering }m{12mm}E}
\cline{2-5}
 & {\bf\scriptsize previous criteria} &{\bf\scriptsize new criteria}&{\bf\scriptsize previous+new}&{\bf\scriptsize~~~total~~~}\tabularnewline
\hline
\multicolumn{1}{ElE}{\,general (3,2)} 		& 398 			& \numprint{1130} 		& \numprint{1214}		& \numprint{4003} \tabularnewline\hline
\multicolumn{1}{ElE}{\,inv. or rev. (3,3)\,} 	& \numprint{78721} 	& \numprint{100924} 	& \numprint{172737} 	& \numprint{236558} \tabularnewline \hline
\end{tabular}}
\end{table}

\noindent More detailed experimental results are given in
Section~\ref{sec-experimentations} 
and a gallery of meaningful examples is given in Table~\ref{tbl-examples}.

\section{Preliminaries}
\label{s:prelim}

Let $S$ be a finite and non-empty set. We denote by 
$\trans_S$ the set of functions from~$S$ to~$S$, and we denote by
$\perm_S$ the set of bijections from~$S$ to~$S$. 

\subsection{Mealy automaton}
If one forgets about initial and final states, a {\em
(finite, deterministic and complete) automaton} $\aut{A}$ is a
triple 
\(
\bigl( A,\Sigma,\delta = (\delta_i: A\rightarrow A )_{i\in \Sigma} \bigr)
\),
where the \emph{set of states}~$A$
and the \emph{alphabet}~$\Sigma$ are non-empty finite sets, and
where the $\delta_i$'s are functions.
In a condensed way, the automaton is identified with $\delta$, that is
an element of~$\trans_A^{\Sigma}$.

\smallskip

A \emph{Mealy automaton} is a quadruple 
\[
\bigl( A, \Sigma, \delta = (\delta_i: A\rightarrow A )_{i\in \Sigma},
\rho = (\rho_x: \Sigma\rightarrow \Sigma  )_{x\in A} \bigr) \:,
\]
such that both $(A,\Sigma,\delta)$ and $(\Sigma,A,\rho)$ are
automata. 
Another standard terminology for Mealy automaton would
be: letter-to-letter transducer with the same input and output
alphabets. 
A Mealy automaton is identified with an element of
$\trans_A^{\Sigma}\times \trans_{\Sigma}^A$.

Graphically, a Mealy automaton is represented by a labelled directed
graph with:
\[
\mathrm{nodes}: \ A, \qquad \text{arcs (transitions)}: \ x
\stackrel{i | j}{\longrightarrow} y \ \mathrm{if}
\ \delta_i(x)=y\text{ and }\rho_x(i)=j \:.
\]
The notation $x\stackrel{\mot{u}| \mot{v}}{\longrightarrow} y$ with
$\mot{u}=u_1\cdots u_n$, $\mot{v}=v_1\cdots v_n$ is a shorthand for
the existence of a path $x \stackrel{u_1|v_1}{\longrightarrow} x_1
\stackrel{u_2|v_2}{\longrightarrow} x_2 \cdots x_{n-1}
\stackrel{u_{n}|v_n}{\longrightarrow} y$ in $\aut{A}$. 

\smallskip

Two examples of Mealy automata are given in Fig.~\ref{fi-Maut}. 

\begin{figure}[ht]
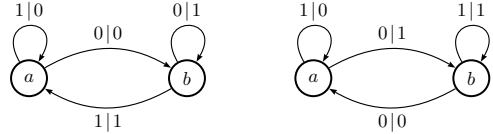

\begin{center}
\TinyPicture
\VCDraw{%
\begin{VCPicture}{(-6,-1)(10,2)}
\LargeState
\State[a]{(-5,0)}{A}
\State[b]{(0,0)}{B}
\State[a]{(4,0)}{C}
\State[b]{(9,0)}{D}
\LArcL[0.5]{A}{B}{\IOL{0}{0}}
\LArcL[0.5]{B}{A}{\IOL{1}{1}}
\LoopN[0.5]{A}{\IOL{1}{0}}
\LoopN[0.5]{B}{\IOL{0}{1}}
\LArcL[0.5]{C}{D}{\IOL{0}{1}}
\LArcL[0.5]{D}{C}{\IOL{0}{0}}
\LoopN[0.5]{C}{\IOL{1}{0}}
\LoopN[0.5]{D}{\IOL{1}{1}}
\end{VCPicture}
}
\end{center}
\caption{Two Mealy automata.}
\label{fi-Maut} 
\end{figure}

In a Mealy automaton $(A,\Sigma, \delta, \rho)$, the sets $A$ and
$\Sigma$ play dual roles. So we may consider the \emph{dual (Mealy)
  automaton} defined by
\(
\dz(\aut{A}) = (\Sigma,A, \rho, \delta)
\).
Alternatively, we can define the dual Mealy automaton via the set of
its transitions:
\begin{equation}
x \stackrel{i\mid j}{\longrightarrow} y \ \in \aut{A} \quad \iff \quad i \stackrel{x\mid y}{\longrightarrow} j \ \in
\dz(\aut{A}) \:. 
\label{eq-dual}
\end{equation}
In what follows, it is often pertinent to consider a Mealy automaton
and its dual together, that is to work with the pair $\{\aut{A},
\dz(\aut{A})\}$. A pair of dual Mealy automata is represented in
Fig.~\ref{fi-Mautdual}. 

\begin{figure}[ht]
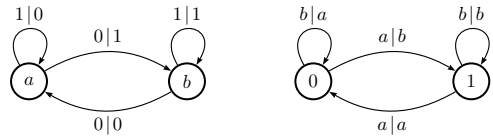

\begin{center}
\TinyPicture
\VCDraw{%
\begin{VCPicture}{(-6,-.5)(10,2)}
\LargeState
\State[a]{(-5,0)}{A}
\State[b]{(0,0)}{B}
\State[0]{(4,0)}{C}
\State[1]{(9,0)}{D}
\LArcL[0.5]{A}{B}{\IOL{0}{1}}
\LArcL[0.5]{B}{A}{\IOL{0}{0}}
\LoopN[0.5]{A}{\IOL{1}{0}}
\LoopN[0.5]{B}{\IOL{1}{1}}
\LArcL[0.5]{C}{D}{\IOL{a}{b}}
\LArcL[0.5]{D}{C}{\IOL{a}{a}}
\LoopN[0.5]{C}{\IOL{b}{a}}
\LoopN[0.5]{D}{\IOL{b}{b}}
\end{VCPicture}
}
\end{center}
\caption{A pair of dual Mealy automata.}
\label{fi-Mautdual} 
\end{figure}

Consider a Mealy automaton $\aut{A}\in \trans_A^{\Sigma}\times
\perm_{\Sigma}^A$.
Let $A^{-1}=\{x^{-1}, x \in A\}$ be a disjoint copy
of $A$. The \emph{inverse (Mealy) automaton} $\inverse{\aut{A}}\in \trans_{A^{-1}}^{\Sigma}\times
\perm_{\Sigma}^{A^{-1}}$ is 
defined by the set of its transitions:
\begin{equation}
x \stackrel{i\mid j}{\longrightarrow} y \ \in \aut{A} \quad \iff \quad x^{-1}
\stackrel{j\mid i}{\longrightarrow} y^{-1} \ \in \aut{A}^{-1} \:. 
\label{eq-inv}
\end{equation}

\smallskip

Let us call respectively \emph{dualization} (denoted $\dz$) and {\em
  inversion} (denoted $\inver$) the two
transformations on transitions defined in \eref{eq-dual} and
\eref{eq-inv}. Starting with a transition and alternating the
dualization and inversion transformations, we obtain eight
transitions. (In the process, we also define $\Sigma^{-1}=\{x^{-1}, x
\in \Sigma\}$, a disjoint copy of $\Sigma$; and we set $(A^{-1})^{-1}=A$ and
$(\Sigma^{-1})^{-1}=\Sigma$.)

\smallskip

Now consider a Mealy automaton $\aut{A}$ identified with its set of
transitions, and apply the same transformations to $\aut{A}$. We
obtain eight sets of transitions that we denote by: 
\[
\aut{A}, \ \dz(\aut{A}), \ \inver(\aut{A}), \ \dz\inver(\aut{A}),
\ \inver\dz(\aut{A}), \ \dz\inver\dz(\aut{A}),
\  \inver\dz\inver(\aut{A}), \ \dz\inver\dz\inver(\aut{A}) =
\inver\dz\inver\dz(\aut{A}) \:.
\]
If $\aut{A}\in \trans_A^{\Sigma}\times
\perm_{\Sigma}^A$, then $\inver(\aut{A})=\inverse{\aut{A}}$. 
Apart from $\dz(\aut{A})$ which is always a Mealy automaton, the other
six sets may or may not define a Mealy automaton depending on
$\aut{A}$. 

\smallskip

By tracking the content of the sets of transitions, we observe the
following:
\begin{align*}
\bigl[ \dz\inver\dz\inver(\aut{A}) \ \in  \bigr. & \left. \trans_A^{\Sigma}\times
\trans_{\Sigma}^A \right]
 \implies  \\
& \left[ \aut{A},  \dz(\aut{A}), \inver(\aut{A}),  \dz\inver(\aut{A}),
 \inver\dz(\aut{A}),  \dz\inver\dz(\aut{A}),
 \inver\dz\inver(\aut{A}), \dz\inver\dz\inver(\aut{A}) \ \in  \perm_A^{\Sigma}\times
\perm_{\Sigma}^A \right] \:.
\end{align*}

Let us introduce some additional terminology. 

\begin{definition}
A Mealy automaton is  {\em invertible} if it belongs to $\trans_A^{\Sigma}\times
\perm_{\Sigma}^A$; and \emph{reversible} if it belongs to $\perm_A^{\Sigma}\times
 \trans_{\Sigma}^A$. A Mealy automaton is an {\em IR-automaton} if it
 is both invertible and  
 reversible, that is, if it belongs to $\perm_A^{\Sigma}\times
 \perm_{\Sigma}^A$. If $\dz\inver\dz\inver(\aut{A})$ is a Mealy
 automaton, we say that $\aut{A}$ (\resp $\dz(\aut{A}), \dots ,
\dz\inver\dz\inver(\aut{A})$) is {\em bireversible}.
\end{definition}

The terms ``invertible, reversible, and bireversible'' are standard since~\cite{mns}. 
The acronym IR-automaton is introduced for convenience. 
IR-automata are of particular
interest and the core of the paper is devoted to them. 
In Fig.~\ref{fi-Maut}, the right
Mealy automaton is an IR-automaton,
but not the left one. 

\paragraph{Mealy automaton of order $(n,k)$.}
Consider a Mealy automaton $\aut{A} = ( A,\Sigma,\delta,\rho)$ in
$\trans_A^{\Sigma}\times \trans_{\Sigma}^A$ and $n,k>0$. 
The quadruple 
\[
\aut{A}_{n,k} = \bigl( \ A^n,\Sigma^k, (\delta_{\mot{x}} : A^n \rightarrow A^n)_{\mot{x}\in
  \Sigma^k}, (\rho_{\mot{u}} : \Sigma^k \rightarrow \Sigma^k )_{\mot{u}\in
  A^n} \ \bigr) 
\]
is a Mealy automaton in
$\trans_{A^n}^{\Sigma^k}\times\trans_{\Sigma^k}^{A^n}$ that we call the
\emph{Mealy automaton of order $(n,k)$ 
associated with $\aut{A}$}. Observe that $\aut{A}_{1,1}=\aut{A}$.

In Fig.~\ref{fi-A21}, we show the Mealy automaton of order $(2,1)$
associated with the Mealy automaton of Fig.~\ref{fi-Mautdual}. 

\begin{figure}[ht]
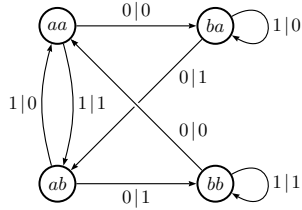

\begin{center}
\TinyPicture
\VCDraw{%
\begin{VCPicture}{(3,-.5)(11,5.5)}
\LargeState
\State[aa]{(4,5)}{C}
\State[ab]{(4,0)}{D}
\State[ba]{(9,5)}{E}
\State[bb]{(9,0)}{F}
\EdgeL[0.5]{C}{E}{\IOL{0}{0}}
\EdgeR[0.5]{D}{F}{\IOL{0}{1}}
\LoopE[0.5]{E}{\IOL{1}{0}}
\LoopE[0.5]{F}{\IOL{1}{1}}
\ArcL[0.5]{C}{D}{\IOL{1}{1}}
\ArcL[0.5]{D}{C}{\IOL{1}{0}}
\EdgeL[0.2]{E}{D}{\IOL{0}{1}}
\EdgeBorder
\EdgeR[0.2]{F}{C}{\IOL{0}{0}}
\EdgeBorderOff
\end{VCPicture}
}
\end{center}
\caption{Mealy automaton of order $(2,1)$.} 
\label{fi-A21} 
\end{figure}

\subsection{Helix graph}
We have already seen two
 equivalent ways of presenting a Mealy
automaton: $(i)$ as a quadruple $(A,\Sigma,\delta,\rho)$, $(ii)$ as a
labelled directed graph (see
Fig.~\ref{fi-Mautdual}).
We propose
here a third and original one which turns out to be very convenient. 

\smallskip

The \emph{helix graph} $\cH$ of a Mealy automaton $\aut{A}=(A,\Sigma,\delta,\rho)$ is the
directed graph with nodes \(A\times \Sigma\) and arcs \((x,i)
\longrightarrow \bigl(\delta_i(x), \rho_x(i)\bigr)\) for all \((x,i)\).
The \emph{helix graph $\cH_{n,k}$ of order $(n,k)$ associated with
  $\aut{A}$} is the helix graph of $\aut{A}_{n,k}$. 
In Fig.~\ref{fi-helix}, we have represented the helix graph of the
Mealy automaton of Fig.~\ref{fi-Mautdual}. 

\begin{figure}[ht]
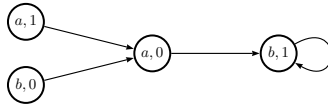

\begin{center}
\TinyPicture
\VCDraw{%
\begin{VCPicture}{(-1,4)(9,4.5)}
\LargeState
\ChgStateLabelScale{0.8}
\State[a,1]{(0,5)}{a1}
\State[b,0]{(0,3)}{b0}
\State[a,0]{(4,4)}{a0}
\State[b,1]{(8,4)}{b1}
\Edge{a1}{a0}
\Edge{b0}{a0}
\Edge{a0}{b1}
\LoopE{b1}{}
\end{VCPicture}
}
\end{center}
\caption{Helix graph.} 
\label{fi-helix} 
\end{figure}

Bireversible automata have a nice characterization using
the helix graph. 

\begin{lemma}\label{le-bi}
Consider an IR-automaton $\aut{A}$ with helix graph $\cH$. We have:
\[
\left[ \ \aut{A} \text{ bireversible } \right]  \iff  \left[ \ 
  \cH \text{ union of
    cycles } \right] \:.
\]
\end{lemma}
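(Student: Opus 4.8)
The plan is to reduce both sides of the claimed equivalence to one and the same combinatorial condition, namely the bijectivity of the map $\Phi\colon A\times\Sigma\to A\times\Sigma$ given by $\Phi(x,i)=(\delta_i(x),\rho_x(i))$, which is exactly the map encoded by the helix graph.

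I would first dispose of the right-hand side. By construction, each node $(x,i)$ of $\cH$ has exactly one outgoing arc, the one towards $\Phi(x,i)$, so $\cH$ is nothing but the functional graph of $\Phi$. A finite directed graph in which every node has out-degree $1$ is a disjoint union of cycles precisely when every node also has in-degree $1$; and this in-degree condition says exactly that $\Phi$ is surjective, hence (on a finite set, with out-degree already $1$) that $\Phi$ is a bijection. So "$\cH$ union of cycles" is equivalent to "$\Phi$ bijective".

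Next I would treat the left-hand side by unwinding the definition of bireversibility directly at the level of transition sets, which makes the chain of transformations unambiguous regardless of whether the intermediate steps happen to be Mealy automata. By definition $\aut{A}$ is bireversible iff the transition set $\dz\inver\dz\inver(\aut{A})$ is a Mealy automaton. Tracking a generic transition $x\stackrel{i\mid j}{\longrightarrow}y$ of $\aut{A}$ (so $y=\delta_i(x)$ and $j=\rho_x(i)$) through the four alternating transformations, applied right to left, via \eref{eq-dual} and \eref{eq-inv}, I would compute that it becomes $y^{-1}\stackrel{j^{-1}\mid i^{-1}}{\longrightarrow}x^{-1}$. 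This yields $|A|\cdot|\Sigma|$ pairwise distinct transitions, and they form a Mealy automaton exactly when each pair (state, input) $=(y^{-1},j^{-1})$ labels precisely one of them. Since the copies $A\to A^{-1}$ and $\Sigma\to\Sigma^{-1}$ are bijections, this is equivalent to the assignment $(x,i)\mapsto(\delta_i(x),\rho_x(i))=\Phi(x,i)$ hitting every element of $A\times\Sigma$ exactly once, that is, to $\Phi$ being a bijection. Combining the two reductions proves the lemma.

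The only genuinely delicate point is the bookkeeping in the computation of $\dz\inver\dz\inver(\aut{A})$: one must apply the two involutions in the right order and keep all the inverses straight, but once the composite transition $y^{-1}\stackrel{j^{-1}\mid i^{-1}}{\longrightarrow}x^{-1}$ has been produced everything else is immediate. For completeness I would also record the elementary fact used above, that a set of transitions over states $Q$ and alphabet $\Gamma$ forms a Mealy automaton as soon as each pair in $Q\times\Gamma$ labels a unique transition: this single requirement already makes both $(Q,\Gamma,\delta)$ and $(\Gamma,Q,\rho)$ deterministic and complete, so no separate check of the dual component is needed.
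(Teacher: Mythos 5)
Your proof is correct and follows essentially the same route as the paper's: both reduce bireversibility to the statement that every node of $\cH$ has a unique predecessor, i.e.\ that the map $(x,i)\mapsto(\delta_i(x),\rho_x(i))$ is a bijection --- the paper phrases this by observing that the arc-reversed graph $\widetilde{\cH}$ is precisely the helix graph of $\dz\inver\dz\inver(\aut{A})$, which is exactly what your explicit computation of that transition set establishes. Your write-up spells out the bookkeeping that the paper's (very terse) argument leaves implicit, and, like the paper's, it never actually needs the IR hypothesis.
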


\begin{proof}
Define the directed graph $\widetilde{\cH}$ as follows: 
\begin{itemize}
\item nodes: \(A^{-1}\times \Sigma^{-1}\),
\item arcs: \((x^{-1},i^{-1})\longrightarrow (y^{-1},j^{-1})\) if
  \((y,j) \longrightarrow (x,i)\) is an arc of \(\cH\).
\end{itemize}
If $\dz\inver\dz\inver(\aut{A})$ is a Mealy automaton, then
$\widetilde{\cH}$ is its helix graph.

Conversely, assume that $\aut{A}$ is an IR-automaton and that
$\cH$ is a union of cycles. Consider a node $(y,j)$ of
$\cH$: it has a unique predecessor in $\cH_{1,1}$.
\end{proof}

\subsection{Automaton (semi)group}\label{sse-automgroup}
Let $\aut{A} = (A,\Sigma, \delta,\rho)$ be a Mealy automaton. 
We view $\aut{A}$ as an automaton with an input and an output tape, thus
defining mappings from input words over $\Sigma$ to output words
over~$\Sigma$. 
Formally, for $x\in A$, the map
$\rho_x : \Sigma^* \rightarrow \Sigma^*$,
extending $\rho_x : \Sigma \rightarrow \Sigma$, is defined by:
\[
\rho_x (\mot{u}) = \mot{v} \quad \textrm{if} \quad \exists y,
\ x\stackrel{\mot{u}|\mot{v}}{\longrightarrow} y \:.\]
By convention, the image of the empty word
is itself. The mapping $\rho_x$ is length-preserving and
prefix-preserving (the prefix of the image is the image of the prefix). 
It satisfies 
\begin{equation}\label{eq-property}
\forall u \in \Sigma, \ \forall \mot{v} \in \Sigma^*, \qquad
\rho_x(u\mot{v}) = \rho_x(u)\rho_{\delta_u(x)}(\mot{v}) \:.
\end{equation} 
We can also use \eref{eq-property} to define 
$\rho_x:\Sigma^* \rightarrow \Sigma^*$ inductively starting from $\rho_x:\Sigma
\rightarrow \Sigma$. We say that $\rho_x$ is the \emph{production
function} associated with
$(\aut{A},x)$. 
For $\mot{u}=u_1\cdots u_n \in A^n$, $n>0$, set
\(\rho_\mot{u}: \Sigma^* \rightarrow \Sigma^*, \rho_\mot{u} = \rho_{u_n}
\circ \cdots \circ \rho_{u_1} \:\).

\begin{definition}
Consider $\aut{A} \in \trans_A^{\Sigma}\times \trans_{\Sigma}^A$. 
The semigroup of mappings from $\Sigma^*$ to $\Sigma^*$ generated by
$\rho_x, x\in A$, is called the \emph{semigroup of $\aut{A}$} and is
denoted by $\presm{\aut{A}}$. 
Assume that $\aut{A} \in \trans_A^{\Sigma}\times \perm_{\Sigma}^A$. The group of
mappings from $\Sigma^*$ to $\Sigma^*$ generated as a group by
$\rho_x, x\in A$, is called the \emph{group of $\aut{A}$} and is 
denoted by $\pres{\aut{A}}$.
\end{definition}

The above definition makes sense. Indeed if $\aut{A} \in
\trans_A^{\Sigma}\times \perm_{\Sigma}^A$, then 
the production mapping
$\rho_x$ associated with $(\aut{A},x)$ is a bijection from $\Sigma^*$ to $\Sigma^*$.
The inverse
bijection $\rho_x^{-1}:\Sigma^*\rightarrow \Sigma^*$ is the
production mapping $\rho_{x^{-1}}$ associated with $(\inverse{\aut{A}},x^{-1})$, where $\inverse{\aut{A}}$
is the inverse Mealy automaton defined in \eref{eq-inv}. Therefore, we have 
 \[
 \presm{\aut{A}}= \{ \rho_\mot{u}, \mot{u} \in A^* \}, \qquad
 \pres{\aut{A}} = \{ \rho_\mot{u}, \mot{u} \in (A\sqcup A^{-1})^* \}
 \:.
 \]

\begin{lemma}\label{lm-g-sg-finis}
Let \(\aut{A}\) be an IR-automaton.
Then we have
$\pres{\aut{A}}=\pres{\inverse{\aut{A}}} = \pres{ \aut{A} \sqcup
  \inverse{\aut{A}}} = \presm{ \aut{A} \sqcup \inverse{\aut{A}}}$, where
$\aut{A} \sqcup \inverse{\aut{A}}$ is the Mealy automaton whose set of transitions
is the union of the ones of 
$\aut{A}$ and $\inverse{\aut{A}}$. 
Furthermore, if either \(\pres{\aut{A}}\) or \(\presm{\aut{A}}\) is
finite, then we have~\(\pres{\aut{A}} = \presm{\aut{A}}\).
\end{lemma}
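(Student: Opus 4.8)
The plan is to treat the two assertions separately, relying on the explicit descriptions $\presm{\aut{A}}=\{\rho_\mot{u}:\mot{u}\in A^*\}$ and $\pres{\aut{A}}=\{\rho_\mot{u}:\mot{u}\in(A\sqcup A^{-1})^*\}$ recorded just before the statement, together with the relation $\rho_{x^{-1}}=\rho_x^{-1}$ established when the inverse automaton was introduced. Since $\aut{A}$ is invertible, each $\rho_x$ is a bijection of $\Sigma^*$, so all the objects in play live inside the symmetric group on $\Sigma^*$; I would first note in passing that $\inverse{\aut{A}}$ and $\aut{A}\sqcup\inverse{\aut{A}}$ are indeed invertible (each $\rho_x^{-1}$ is a permutation of $\Sigma$ whenever $\rho_x$ is), so that the group notation $\pres{\cdot}$ makes sense for them.

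For the chain of equalities I would proceed as follows. The production functions of $\inverse{\aut{A}}$ are exactly the inverses $\rho_x^{-1}$, so $\pres{\inverse{\aut{A}}}$ and $\pres{\aut{A}}$ are generated by the mutually inverse sets $\{\rho_x^{-1}\}$ and $\{\rho_x\}$; since a group coincides with the group generated by the inverses of any of its generating sets, this gives $\pres{\aut{A}}=\pres{\inverse{\aut{A}}}$. The automaton $\aut{A}\sqcup\inverse{\aut{A}}$ has stateset $A\sqcup A^{-1}$ and production functions $\{\rho_x\}\cup\{\rho_x^{-1}\}$, which generate the same group, whence $\pres{\aut{A}\sqcup\inverse{\aut{A}}}=\pres{\aut{A}}$. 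Finally, $\presm{\aut{A}\sqcup\inverse{\aut{A}}}=\{\rho_\mot{u}:\mot{u}\in(A\sqcup A^{-1})^*\}$ is literally the same set of maps as $\pres{\aut{A}}$; equivalently, its generating set is closed under inversion and contains $\id=\rho_x^{-1}\circ\rho_x$, so this semigroup is already a group, closing the first part.

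For the second assertion, the inclusion $\presm{\aut{A}}\subseteq\pres{\aut{A}}$ is immediate from the two descriptions. The crucial ingredient I would invoke is the classical fact that a finite subsemigroup of a group is a subgroup: if $S$ is finite and $s\in S$, then left multiplication by $s$ is a bijection of $S$, which produces a neutral element and a two-sided inverse for $s$ inside $S$. Applying this, suppose first that $\presm{\aut{A}}$ is finite; being a finite subsemigroup of the group $\pres{\aut{A}}$, it is a subgroup, hence contains each $\rho_x^{-1}$ and therefore all of $\pres{\aut{A}}$, forcing $\presm{\aut{A}}=\pres{\aut{A}}$. If instead $\pres{\aut{A}}$ is finite, then its subset $\presm{\aut{A}}$ is finite and the previous case applies.

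The only genuinely non-bookkeeping step is this finite-subsemigroup-of-a-group lemma, whose short proof I would recall inline; everything else reduces to unwinding the definitions of $\presm{\cdot}$ and $\pres{\cdot}$ and the identity $\rho_{x^{-1}}=\rho_x^{-1}$. I do not anticipate a serious obstacle, the mild care required being only to keep the composition order straight (recall $\rho_\mot{u}=\rho_{u_n}\circ\cdots\circ\rho_{u_1}$) and to confirm the invertibility of the auxiliary automata so that writing $\pres{\inverse{\aut{A}}}$ and $\pres{\aut{A}\sqcup\inverse{\aut{A}}}$ is legitimate.
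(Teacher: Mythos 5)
Your proposal is correct and follows essentially the same route as the paper: the chain of equalities is unwound from the definitions, and the second assertion rests on the fact that a finite subsemigroup of a group is a subgroup (the paper derives the inverse of $x$ as $x^{n-1}$ from $x^{n+k}=x^k$, while you invoke bijectivity of left multiplication — two standard proofs of the same fact). No gaps.
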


\begin{proof}
The first statement follows directly from the definitions. 
Suppose that \(\presm{\aut{A}}\) is finite and let \(x\) be one of its
elements. Since the semigroup  \(\presm{\aut{A}}\) is finite, there exist~$k$ and~$n$ such that $x^{n+k}=x^k$. So we have \(x^n=1\) in the group \(\pres{\aut{A}}\). 
 Hence the inverse of \(x\) is
\(x^{n-1}\) which belongs to the semigroup \(\presm{\aut{A}}\). So we
have  \(\pres{\aut{A}} = \presm{\aut{A}}\). 
Assume now that \(\pres{\aut{A}}\) is finite.
Since the semigroup~\(\presm{\aut{A}}\) naturally embeds into the group~\(\pres{\aut{A}}\), it is also
finite. 
\end{proof}

\begin{definition}\label{de-gag}
A semigroup $M$ is called an \emph{automaton semigroup} if there exists
a Mealy automaton $\aut{A}$ such that $M = \presm{\aut{A}}$. 
A group $G$ is
called an \emph{automaton group} if there exists an invertible Mealy automaton
$\aut{A}$ such that $G = \pres{\aut{A}}$.  
In both cases, we say that $\aut{A}$ \emph{generates} the (semi)group.
\end{definition}

Denote dually by $\delta_i:A^*\rightarrow A^*,
i\in \Sigma$, the production mappings associated with
the dual Mealy automaton $\dz(\aut{A})$. For $\mot{v}=v_1\cdots v_n
\in \Sigma^n$, $n>0$, set $\delta_\mot{v}: A^* \rightarrow A^*,
\ \delta_\mot{v} = \delta_{v_n}\circ \cdots \circ \delta_{v_1}$. 

\smallskip

A pair of Mealy automata $\{\aut{A},\dz(\aut{A})\}$ generates a pair
of (semi)groups.

\medskip

Examples of automata (semi)groups are given in Table~\ref{tbl-examples}.

\bigskip

The two following propositions complement each other.
Proposition~\ref{pr:duale-finitude} is proved by Nekrashevych for a pair
of dual
bireversible Mealy automata~\cite[Lem.1.10.6]{nek}. For the sake of
completeness, we provide a similar proof in the general case. 

\begin{proposition}\label{pr-fifi}
Let $G$ and $H$ be two finite semigroups. There exists a Mealy automa\-ton
$\aut{A}$ such that
$\pres{\aut{A}}_+ =G$ and $\pres{\dz(\aut{A})}_+ =H$.
Let $G$ and $H$ be two finite groups. There exists an IR-automa\-ton
$\aut{A}$ such that
$\pres{\aut{A}} =G$ and $\pres{\dz(\aut{A})} =H$.
\end{proposition}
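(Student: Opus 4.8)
The plan is to prove both statements by a single explicit construction, realizing the prescribed pair as two dual Cayley--type machines glued together. It suffices to produce one Mealy automaton $\aut{A}$ with $\presm{\aut{A}}\cong G$ and $\presm{\dz(\aut{A})}\cong H$; the group case will then follow by checking that the very same automaton is an IR\nobreakdash-automaton, so that $\pres{\aut{A}}=\presm{\aut{A}}$ and $\pres{\dz(\aut{A})}=\presm{\dz(\aut{A})}$.

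The elementary building block I would use is the regular representation. Any finite semigroup $G$ embeds faithfully into $\trans_{G^{1}}$ via $g\mapsto(i\mapsto gi)$, where the adjoined identity of $G^{1}$ guarantees faithfulness. I turn this into a Mealy automaton whose state never changes while reading a letter; then by \eref{eq-property} each production function acts letter by letter, so the generated semigroup is exactly the image of the regular representation, namely a copy of $G$ (the composition convention $\rho_{u_1\cdots u_n}=\rho_{u_n}\circ\cdots\circ\rho_{u_1}$ is what makes $g\mapsto\rho_g$ an isomorphism rather than an anti-isomorphism). The price of this simplicity is that the dual of such a block is trivial, which is precisely why one block cannot carry both $G$ and $H$.

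The core of the construction is to superimpose a $G$\nobreakdash-block and an $H$\nobreakdash-block \emph{orthogonally}, placing the second in dual position. I take the alphabet $\Sigma=G^{1}\sqcup H$ and the stateset $A=G\sqcup H^{1}$. On the $G$\nobreakdash-part, each state from $G$ acts by multiplication on the $G^{1}$\nobreakdash-letters and trivially on the $H$\nobreakdash-letters, never changing state; dually, each letter from $H$ (viewed as a dual state) acts by multiplication on the $H^{1}$\nobreakdash-states and trivially on the $G$\nobreakdash-states, never changing the relevant coordinate. With these choices both $\aut{A}$ and $\dz(\aut{A})$ split into two independent components, and one reads off $\presm{\aut{A}}\cong G$ from the $G$\nobreakdash-component and $\presm{\dz(\aut{A})}\cong H$ from the $H$\nobreakdash-component.

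The delicate point, which I expect to be the main obstacle, is exact size control. Because $A$ must double as the dual alphabet and $\Sigma$ as the dual stateset, the gluing forces cross-transitions: the auxiliary states coming from $H^{1}$ must read the $G^{1}$\nobreakdash-letters, and the $G^{1}$\nobreakdash-letters must act on the $H^{1}$\nobreakdash-states. Were these chosen to be the identity, the auxiliary production functions would be the identity map and would enlarge $\presm{\aut{A}}$ to $G^{1}$ (and $\presm{\dz(\aut{A})}$ to $H^{1}$), which is wrong whenever $G$ or $H$ lacks an identity. I would instead fix one element $g_{0}\in G$ and one $h_{0}\in H$ and let every $H^{1}$\nobreakdash-state act as the constant map $i\mapsto g_{0}i$ on the $G^{1}$\nobreakdash-letters, and every $G^{1}$\nobreakdash-letter act as $x\mapsto h_{0}x$ on the $H^{1}$\nobreakdash-states; a short verification shows that all auxiliary production functions then coincide with the already present $\rho_{g_{0}}$ (\resp $\delta_{h_{0}}$), so no spurious element is created and the two generated semigroups are exactly $G$ and $H$. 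Finally, for the group case I take $g_{0}=1_{G}$ and $h_{0}=1_{H}$; since multiplication by a group element is a bijection, every $\rho_{x}$ lies in $\perm_{\Sigma}$ and every $\delta_{i}$ in $\perm_{A}$, so $\aut{A}$ is an IR\nobreakdash-automaton, and evaluating the generators at the identities gives $\pres{\aut{A}}=G$ and $\pres{\dz(\aut{A})}=H$, completing both statements.
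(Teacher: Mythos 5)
Your construction is correct, but it takes a genuinely different route from the paper's. The paper embeds $G$ and $H$ into symmetric groups $\perm_{\Sigma_1}$ and $\perm_{A_2}$, picks generating sets $A_1\subset\perm_{\Sigma_1}$ of $G$ and $\Sigma_2\subset\perm_{A_2}$ of $H$, and forms a \emph{product} automaton with stateset $A_1\times A_2$, alphabet $\Sigma_1\times\Sigma_2$, and transitions $(a,b)\xrightarrow{(i,j)\mid(a(i),\,j)}(a,\,j(b))$: every state carries both an ``acting'' coordinate (never altered by transitions, hence the letterwise action) and an ``acted-on'' coordinate, so there are no cross-transitions to legislate and the semigroup and group cases go through uniformly with no special treatment of identities. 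Your \emph{sum} construction $A=G\sqcup H^{1}$, $\Sigma=G^{1}\sqcup H$ with the left regular representations shares the same underlying decoupling idea, and your $g_0,h_0$ device is exactly what is needed to keep the auxiliary generators from adjoining an identity: all $\rho_x$ with $x\in H^{1}$ collapse onto $\rho_{g_0}$ and all $\delta_i$ with $i\in G^{1}$ onto $\delta_{h_0}$, so one really gets $G$ and $H$ rather than $G^{1}$ and $H^{1}$ (the map $i\mapsto g_0i$ is left translation by a fixed element, not literally a ``constant map,'' but the argument is sound). What each approach buys: yours is fully explicit and representation-free (a canonical automaton of size about $|G|+|H|$ in each direction), while the paper's avoids the cross-transition issue entirely and can yield much smaller automata when $G$ and $H$ have small generating sets acting on small sets; both finish the group case the same way, by observing that all production functions are bijections so the automaton is an IR-automaton and the finite semigroup coincides with the group.
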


\begin{proof}
We carry out the proof for groups. The argument is similar for
semigroups. 

Any finite group is a subgroup of a permutation group. Let $\Sigma_1$
and $A_2$ be two finite sets such that $G$ is a subgroup of
$\perm_{\Sigma_1}$ and $H$ is a subgroup of $\perm_{A_2}$. Let
$A_1 \subset \perm_{\Sigma_1}$ be a set of generators of $G$, let 
$\Sigma_2 \subset \perm_{A_2}$ be a set of generators of $H$.

Set $A = A_1\times A_2$ and $\Sigma=\Sigma_1\times \Sigma_2$. Consider
the Mealy automaton $\aut{A}$ with states $A$, alphabet $\Sigma$, and
transitions
\[
(a,b) 
\xrightarrow{(i,j) \mid (a(i),j)}
(a,j(b)) \:.
\]
Denote the corresponding mappings by $\delta$ and $\rho$. 
Clearly, for $(a,b)\in A_1\times A_2$ and $(a,b')\in A_1\times A_2$,
we have $\rho_{(a,b)}=\rho_{(a,b')}$ and we denote this mapping by
$\rho_a:\Sigma^*\rightarrow \Sigma^*$. We have, $\forall a\in A_1,
\forall (i_1,j_1)\cdots (i_n,j_n) \in \Sigma^*,$ 
\[
\rho_a \bigl( (i_1,j_1)\cdots (i_n,j_n) \bigr) =
(a(i_1),j_1) \ (a(i_2),j_2)\ \cdots \ (a(i_n),j_n) \:.
\]
So the group generated by $(\rho_a:\Sigma^*\rightarrow \Sigma^*)_{a\in
  A_1}$ is isomorphic to the group generated by
$(a:\Sigma_1\rightarrow \Sigma_1)_{a\in A_1}$. That is $\pres{\aut{A}} =
G$. 
Similarly, $\pres{\dz(\aut{A})} = H$. 
\end{proof}

\begin{sidewaystable}
\centering
\caption{Examples of automata (semi)groups.\label{tbl-examples}}
{
\begin{tabular}{|m{98pt}|m{75pt}|m{70pt}|c|c|c|m{70pt}|m{70pt}|m{65pt}|}
\hline
\multicolumn{3}{|c|}{infinite world}
&\!\!\multirow{2}{*}{\rotatebox{270}{\footnotesize \!\!\!invertible}}\!\!
&\!\!\multirow{2}{*}{\rotatebox{270}{\footnotesize \!\!\!reversible}}\!\!
&\!\!\multirow{2}{*}{\rotatebox{270}{\footnotesize \!\!\!bireversible}}\!\!
&\multicolumn{3}{c|}{finite world}\\
\cline{1-3}\cline{7-9}
\multicolumn{1}{|c|}{\small generated (semi)group}
&\multicolumn{1}{c|}{diagram}
&\multicolumn{1}{c|}{helix graph}
&&&&\multicolumn{1}{c|}{helix graph}
&\multicolumn{1}{c|}{diagram}
&\multicolumn{1}{c|}{\small gen. (semi)group}\\
\cline{1-3}\cline{7-9}
\footnotesize
the semigroup~$\mathbf{S_{I_2}}$

(the very smallest Mealy automaton with intermediate growth, see~\cite{brs})
&\centering
\SmallPicture\VCDraw{%
\begin{VCPicture}{(0,-.7)(4,2.3)}
\State[a]{(0,0)}{A} \State[b]{(4,0)}{B}
\EdgeL{B}{A}{\IOL{1}{1}}
\LoopN[.7]{A}{\StackTwoLabels{\IOL{0}{1}}{\IOL{1}{0}}}
\LoopN[.4]{B}{\IOL{0}{1}}
\end{VCPicture}}
&\centering
\SmallPicture\VCDraw{%
\begin{VCPicture}{(0,-.7)(2,2.7)}
\State[a0]{(0,0)}{A0} \State[a1]{(2,0)}{A1}
\State[b1]{(2,2)}{B1} \State[b0]{(0,2)}{B0}
\ArcR{A0}{A1}{}
\ArcR{A1}{A0}{}
\EdgeL{B1}{A1}{}
\EdgeL{B0}{B1}{}
\end{VCPicture}}
&&&
&\centering
\SmallPicture\VCDraw{%
\begin{VCPicture}{(-2,-0.3)(2,2.3)}
\State[a0]{(-2,2)}{A0}	\State[a2]{(0,2)}{A2} 	\State[b2]{(2,2)}{B2}
\State[b0]{(-2,0)}{B0}	\State[b1]{(0,0)}{B1} 	\State[a1]{(2,0)}{A1}
\EdgeL{A0}{B0}{}
\EdgeL{A2}{B0}{}
\EdgeL{B2}{A2}{}
\EdgeL{A1}{B2}{}
\ArcR{B0}{B1}{}
\ArcR{B1}{B0}{}
\end{VCPicture}}
&\centering
\SmallPicture\VCDraw{%
\begin{VCPicture}{(0,-1.3)(4,2.3)}
\State[a]{(0,0)}{A} \State[b]{(4,0)}{B}
\CLoopN[.3]{B}{\StackTwoLabels{\IOL{0}{1}}{\IOL{1}{0}}}
\ArcL[.2]{B}{A}{\IOL{2}{2}}
\ArcL[.2]{A}{B}{\StackThreeLabels{\IOL{0}{0}}{\IOL{1}{2}}{\IOL{2}{0}}}
\end{VCPicture}}
&
an order~13597

semigroup
\\
\hline

\footnotesize
the {\bf Grigorchuk group}

see~\cite{gns}
&\centering
\TinyPicture\VCDraw{%
\begin{VCPicture}{(0,-4.8)(4,0.8)}
\State[a]{(0,-4)}{A} \State[b]{(0,0)}{B} \State[c]{(2,-2)}{C} \State[d]{(4,0)}{D} \State[e]{(4,-4)}{E}
\EdgeL[.8]{A}{E}{\StackTwoLabels{\IOL{0}{1}}{\IOL{1}{0}}}
\EdgeR[.3]{B}{A}{\IOL{0}{0}}
\EdgeR[.7]{B}{C}{\IOL{1}{1}}
\EdgeL[.3]{C}{A}{\IOL{0}{0}}
\EdgeR[.3]{C}{D}{\IOL{1}{1}}
\EdgeL[.3]{D}{E}{\IOL{0}{0}}
\EdgeL{D}{B}{\IOL{1}{1}}
\CLoopE[.2]{E}{\StackTwoLabels{\IOL{0}{0}}{\IOL{1}{1}}}
\end{VCPicture}}
&\centering
\TinyPicture\VCDraw{%
\begin{VCPicture}{(1,-2.3)(5,2.3)}
\State[a0]{(1,0)}{a0} \State[a1]{(5,0)}{a1}
\State[b0]{(0.4,2)}{b0} \State[b1]{(5,2)}{b1}
\State[c0]{(1.6,2)}{c0} \State[c1]{(3,2)}{c1}
\State[d0]{(3,-2)}{d0} \State[d1]{(3,0)}{d1}
\State[e0]{(5,-2)}{e0} \State[e1]{(1,-2)}{e1}
\EdgeL{b0}{a0}{}
\EdgeL{c0}{a0}{}
\EdgeL{a0}{e1}{}
\CLoopNE{e1}{}
\EdgeL{d0}{e0}{}
\EdgeL{a1}{e0}{}
\CLoopNW{e0}{}
\EdgeL{b1}{c1}{}
\EdgeL{c1}{d1}{}
\EdgeR{d1}{b1}{}
\end{VCPicture}}
&\cellcolor{gray}&&
&\centering \SmallPicture\VCDraw{%
\begin{VCPicture}{(-2.5,-0.3)(2.5,2.3)}
\State[c0]{(-2,2)}{c0}	\State[b0]{(0,2)}{b0}	\State[c1]{(2,2)}{c1}
\State[a0]{(-2,0)}{a0}\State[a1]{(0,0)}{a1} \State[b1]{(2,0)}{b1}
\EdgeL{c1}{b1}{}
\EdgeL{b0}{a0}{}
\EdgeL{c0}{a0}{}
\EdgeL{b1}{a1}{}
\ArcR{a0}{a1}{}
\ArcR{a1}{a0}{}
\end{VCPicture}}
&\centering\SmallPicture\VCDraw{%
\begin{VCPicture}{(0,-.3)(4,2.3)}
\State[a]{(2,0)}{A} \State[b]{(0,2)}{B} \State[c]{(4,2)}{C}
\CLoopE[.55]{A}{\StackTwoLabels{\IOL{0}{1}}{\IOL{1}{0}}}
\EdgeR{B}{A}{\StackTwoLabels{\IOL{0}{0}}{\IOL{1}{1}}}
\EdgeL[.3]{C}{A}{{\IOL{0}{0}}}
\EdgeR{C}{B}{{\IOL{1}{1}}}
\end{VCPicture}}
&
the group~$\Z_2\times D_4$\\
\hline

\footnotesize
the {\bf Basilica group}

see~\cite{gns}
&\centering
\SmallPicture\VCDraw{%
\begin{VCPicture}{(0,-2.8)(4,1.2)}
\State[a]{(0,0)}{A} \State[b]{(4,0)}{B} \State[c]{(2,-2)}{C}
\ArcL[.2]{A}{B}{\IOL{0}{1}}
\EdgeR[.25]{A}{C}{\IOL{1}{0}}
\ArcL[.5]{B}{A}{\IOL{0}{0}}
\EdgeL[.25]{B}{C}{\IOL{1}{1}}
\CLoopE[.51]{C}{\StackTwoLabels{\IOL{0}{0}}{\IOL{1}{1}}}
\end{VCPicture}}
&\centering
\SmallPicture\VCDraw{%
\begin{VCPicture}{(-2.5,-0.3)(2.5,2.3)}
\State[c0]{(-2,2)}{c0}		\State[a1]{(0,2)}{a1} 	\State[c1]{(2,2)}{c1}
\State[b0]{(-2,0)}{b0}	\State[a0]{(0,0)}{a0} 	\State[b1]{(2,0)}{b1}
\EdgeL{b0}{a0}{}
\EdgeL{a0}{b1}{}
\EdgeL{a1}{c0}{}
\EdgeL{b1}{c1}{}
\CLoopS{c0}{}
\CLoopW{c1}{}
\end{VCPicture}}
&\cellcolor{gray}&&
&\centering
\SmallPicture\VCDraw{%
\begin{VCPicture}{(0,0)(2,2)}
\State[b0]{(0,2)}{b0} \State[a0]{(2,2)}{a0} 
\State[b1]{(0,0)}{b1}\State[a1]{(2,0)}{a1}
\EdgeL{b0}{a0}{}
\EdgeL{b1}{a1}{}
\ArcR{a0}{a1}{}
\ArcR{a1}{a0}{}
\end{VCPicture}}
&\centering
\SmallPicture\VCDraw{%
\begin{VCPicture}{(0,-1)(4,1)}
\State[a]{(0,0)}{A} \State[b]{(4,0)}{B}
\LoopN[.8]{A}{\StackTwoLabels{\IOL{0}{1}}{\IOL{1}{0}}}
\EdgeL{B}{A}{\StackTwoLabels{\IOL{0}{0}}{\IOL{1}{1}}}
\end{VCPicture}}
&the Klein $4$-group $V=\Z_2\times\Z_2$
\\
\hline

\footnotesize
the {\bf lamplighter group}

$L=\Z\wr\Z_2$

see~\cite{gns}
&\centering \SmallPicture\VCDraw{%
\begin{VCPicture}{(0,-1.3)(4,2)}
\State[a]{(0,0)}{A} \State[b]{(4,0)}{B}
\ArcL{A}{B}{\IOL{0}{1}}
\ArcL{B}{A}{\IOL{0}{0}}
\CLoopN[.4]{A}{\IOL{1}{0}}
\CLoopN[.6]{B}{\IOL{1}{1}}
\end{VCPicture}}
&\centering \SmallPicture\VCDraw{%
\begin{VCPicture}{(-.5,-1.1)(4.5,1.1)}
\State[a0]{(2,0)}{a0} \State[a1]{(0,.8)}{a1}
\State[b0]{(0,-.8)}{b0} \State[b1]{(4,0)}{b1}
\EdgeL{a1}{a0}{}
\EdgeL{b0}{a0}{}
\EdgeL{a0}{b1}{}
\CLoopN{b1}{}
\end{VCPicture}}
&\cellcolor{gray}&\cellcolor{gray}
&\\
\hline

{\footnotesize the rank~3 free group

({\bf Ale\v{s}in} automaton)

see~\cite{aleshin,svv}}
&\centering
\SmallPicture\VCDraw{%
\begin{VCPicture}{(0,-2.5)(4,1.2)}
\State[a]{(0,0)}{A} \State[b]{(2,-2)}{B} \State[c]{(4,0)}{C}
\ArcR[.5]{A}{C}{\IOL{0}{1}}
\EdgeR[.2]{A}{B}{\IOL{1}{0}}
\CLoopW[.5]{B}{\IOL{0}{1}}
\EdgeR[.2]{B}{C}{\IOL{1}{0}}
\ArcR[.1]{C}{A}{\StackTwoLabels{\IOL{0}{0}}{\IOL{1}{1}}}
\end{VCPicture}}
&\centering \SmallPicture\VCDraw{%
\begin{VCPicture}{(-.5,-1.3)(4.5,1.3)}
\State[b1]{(0,1)}{b1}		\State[c0]{(2,1)}{c0}		\State[a0]{(4,1)}{a0}
\State[b0]{(0,-1)}{b0}	\State[a1]{(2,-1)}{a1}	\State[c1]{(4,-1)}{c1}
\EdgeL{b1}{c0}{}
\EdgeL{c0}{a0}{}
\EdgeL{a0}{c1}{}
\EdgeL{c1}{a1}{}
\EdgeL{a1}{b0}{}
\EdgeL{b0}{b1}{}
\end{VCPicture}}
&\cellcolor{gray}&\cellcolor{gray}&\cellcolor{gray}
&\centering \SmallPicture\VCDraw{%
\begin{VCPicture}{(-.5,-1.3)(4.5,1.3)}
\State[a0]{(0,1)}{a0} 	\State[a1]{(2,1)}{a1}		\State[b0]{(4,1)}{b0}
\State[b1]{(0,-1)}{b1}	\State[b2]{(2,-1)}{b2}	\State[a2]{(4,-1)}{a2}
\EdgeL{b0}{a1}{}
\EdgeL{a1}{b2}{}
\EdgeL{b2}{a2}{}
\EdgeR{a2}{b0}{}
\ArcR{a0}{b1}{}
\ArcR{b1}{a0}{}
\end{VCPicture}}
&\centering\SmallPicture\VCDraw{%
\begin{VCPicture}{(0,-2.3)(4,2.3)}
\State[a]{(0,0)}{A} \State[b]{(4,0)}{B}
\ArcL[.1]{A}{B}{\StackThreeLabels{\IOL{0}{1}}{\IOL{1}{2}}{\IOL{2}{0}}}
\ArcL[.1]{B}{A}{\StackThreeLabels{\IOL{0}{1}}{\IOL{1}{0}}{\IOL{2}{2}}}
\end{VCPicture}}
&an order~36 group\\
\hline

{\footnotesize
the free product

$\Z_2^{*3}=\Z_2*\Z_2*\Z_2$

({\bf BabyAle\v{s}in} automaton)

see~\cite{svv}}
&\centering
\SmallPicture\VCDraw{%
\begin{VCPicture}{(0,-2.5)(4,1.2)}
\State[a]{(0,0)}{A} \State[b]{(2,-2)}{B} \State[c]{(4,0)}{C}
\ArcL[.1]{A}{C}{\StackTwoLabels{\IOL{0}{1}}{\IOL{1}{0}}}
\EdgeL[.2]{B}{A}{\IOL{0}{0}}
\CLoopE[.5]{B}{\IOL{1}{1}}
\EdgeL[.2]{C}{B}{\IOL{0}{0}}
\ArcL[.5]{C}{A}{\IOL{1}{1}}
\end{VCPicture}}
&\centering\SmallPicture\VCDraw{%
\begin{VCPicture}{(-.5,-1.3)(4.5,1.3)}
\State[c0]{(0,1)}{c0}		\State[a1]{(2,1)}{a1}		\State[b1]{(4,1)}{b1}
\State[b0]{(0,-1)}{b0}	\State[a0]{(2,-1)}{a0} 	\State[c1]{(4,-1)}{c1}
\EdgeL{c0}{b0}{}
\EdgeL{b0}{a0}{}
\EdgeL{a0}{c1}{}
\EdgeL{c1}{a1}{}
\EdgeL{a1}{c0}{}
\CLoopS{b1}{}
\end{VCPicture}}
&
\cellcolor{gray}
&
\cellcolor{gray}
&
\cellcolor{gray}

&
\centering \SmallPicture\VCDraw{%
\begin{VCPicture}{(0.5,-1.7)(5.5,1.7)}
\State[a3]{(1,1.5)}{a3}	\State[b0]{(3,1.5)}{b0}	\State[b1]{(5,1.5)}{b1}
\State[a2]{(1,0)}{a2} 						\State[a0]{(5,0)}{a0}
\State[b3]{(1,-1.5)}{b3}	\State[b2]{(3,-1.5)}{b2} 	\State[a1]{(5,-1.5)}{a1}
\EdgeR{a2}{b3}{}
\EdgeR{b3}{b2}{}
\EdgeR{b2}{a1}{}
\EdgeR{a1}{a0}{}
\EdgeR{a0}{b1}{}
\EdgeR{b1}{b0}{}
\EdgeR{b0}{a3}{}
\EdgeR{a3}{a2}{}
\end{VCPicture}}
&
\centering\SmallPicture\VCDraw{%
\begin{VCPicture}{(0,-1.9)(4,2.5)}
\State[a]{(0,0)}{A} \State[b]{(4,0)}{B}
\ArcL{A}{B}{\StackTwoLabels{\IOL{0}{1}}{\IOL{2}{3}}}
\ArcL{B}{A}{\StackTwoLabels{\IOL{0}{3}}{\IOL{2}{1}}}
\CLoopN[.5]{A}{\StackTwoLabels{\IOL{1}{0}}{\IOL{3}{2}}}
\CLoopN[.5]{B}{\StackTwoLabels{\IOL{1}{0}}{\IOL{3}{2}}}
\end{VCPicture}}
&
the group~$G_{16}^{(9)}$
\\
\hline
\end{tabular}}
\end{sidewaystable}


\begin{proposition}\label{pr:duale-finitude}
Let $\aut{A}$ be a Mealy automaton. The semigroup ~$\pres {\aut{A}}_+$ is  finite if
and only if the semigroup~$\pres{\dz(\aut{A})}_+$ is finite.
\end{proposition}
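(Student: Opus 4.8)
The plan is to prove just one implication, namely that finiteness of $\presm{\aut{A}}$ entails finiteness of $\presm{\dz(\aut{A})}$, and then to read off the converse at no cost from the fact that dualization is an involution, $\dz(\dz(\aut{A}))=\aut{A}$. So I assume $\presm{\aut{A}}$ is finite, set $N=|\presm{\aut{A}}|$, and aim to bound the number of distinct production functions $\delta_{\mot{v}}\colon A^*\to A^*$ of the dual, where $\mot{v}$ ranges over $\Sigma^{+}$.

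The first and most delicate step is to describe exactly how a dual production function processes a word. Feeding $\mot{u}=u_1\cdots u_n$ into the dual from the state-word $\mot{v}\in\Sigma^k$ is the same as running the order-$(k,1)$ Mealy automaton $\dz(\aut{A})_{k,1}$ on input $\mot{u}$ from state $\mot{v}$; its transition on an input letter $x\in A$ is the map $\rho_x\colon\Sigma^k\to\Sigma^k$ and its output at a state $\mot{w}\in\Sigma^k$ is $\delta_{\mot{w}}\colon A\to A$. Tracking the internal state, after the prefix $u_1\cdots u_{m-1}$ has been consumed the current state is $\rho_{u_1\cdots u_{m-1}}(\mot{v})$ --- with the composition convention $\rho_{u_1\cdots u_{m-1}}=\rho_{u_{m-1}}\circ\cdots\circ\rho_{u_1}$ of \eref{eq-property} --- so the $m$-th letter of $\delta_{\mot{v}}(\mot{u})$ equals
\[
\delta_{\,\rho_{u_1\cdots u_{m-1}}(\mot{v})}(u_m)\ \in A .
\]
Getting this identity right is where care is needed: one must check that the intermediate state of the dual is governed by the \emph{primal} production function $\rho$ acting on the length-$k$ word $\mot{v}$, and it is easy to scramble the dual and composition conventions here. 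Everything afterwards is light bookkeeping.

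The identity shows that $\delta_{\mot{v}}$ is completely determined by the finite table
\[
F_{\mot{v}}\colon \presm{\aut{A}}\cup\{\id\}\longrightarrow \trans_A,\qquad g\longmapsto\bigl(u\mapsto \delta_{g(\mot{v})}(u)\bigr).
\]
Indeed, in the displayed formula the factor $g_{m-1}=\rho_{u_1\cdots u_{m-1}}$ is an element of $\presm{\aut{A}}$ (or $\id$ when $m=1$) that depends only on the prefix of $\mot{u}$ and not on $\mot{v}$; hence the $m$-th letter of $\delta_{\mot{v}}(\mot{u})$ is $F_{\mot{v}}(g_{m-1})(u_m)$, and consequently $F_{\mot{v}}=F_{\mot{v}'}$ forces $\delta_{\mot{v}}=\delta_{\mot{v}'}$ on the whole of $A^*$. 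The domain of $F_{\mot{v}}$ has at most $N+1$ elements and the codomain $\trans_A$ has $|A|^{|A|}$ elements, so there are at most $(|A|^{|A|})^{N+1}$ possible tables, and therefore at most that many distinct maps $\delta_{\mot{v}}$. Thus $\presm{\dz(\aut{A})}=\{\delta_{\mot{v}}:\mot{v}\in\Sigma^{+}\}$ is finite.

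To finish, I would apply this implication to $\dz(\aut{A})$ in place of $\aut{A}$: since $\dz(\dz(\aut{A}))=\aut{A}$, finiteness of $\presm{\dz(\aut{A})}$ likewise forces finiteness of $\presm{\aut{A}}$, which yields the equivalence. Note that no stabilization lemma for the tower $(\Sigma^k)_k$ is required, because the set $\presm{\aut{A}}$ indexing $F_{\mot{v}}$ is assumed finite from the outset; the only real work is the intermediate-state formula, after which the conclusion is a one-line counting estimate.
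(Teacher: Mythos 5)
Your proof is correct and takes essentially the same route as the paper: your intermediate-state identity is exactly the displayed recursion for the production function (read in the dual), and your table $F_{\mot{v}}$, a map from the finite semigroup to $\trans_A$, is precisely the paper's ``transducer built on the Cayley graph'', counted in the same way to give the same kind of bound $|\trans_A|^{N+1}$. The only cosmetic difference is that you prove the symmetric implication and invoke $\dz(\dz(\aut{A}))=\aut{A}$ to conclude, whereas the paper starts from finiteness of the dual.
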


Proposition \ref{pr:duale-finitude} extends to groups using
Lemma~\ref{lm-g-sg-finis}.

 \begin{proof}
 Set $\aut{A}=(A,\Sigma,\delta,\rho)$ and 
 assume that
 $\pres{\dz(\aut{A})}_+= \{\delta_{\mot{u}}: A^* \rightarrow A^*, \ \mot{u}\in
 \Sigma^*\}$ is finite.
Consider the Cayley graph~$\cal G$ of~$\pres{\dz(\aut{A})}_+$ with respect to the set of generators~$\Sigma$, see the left of the figure just below. Now fix $\mot{w}\in
A^*$ and recall that 
\[
\rho_{\mot{w}}(u_1u_2\cdots u_n) := 
 \rho_{\mot{w}}(u_1)\rho_{\delta_{u_1}(\mot{w})}(u_2)\rho_{\delta_{u_1u_2}(\mot{w})}(u_3)\cdots
 \rho_{\delta_{u_1u_2\cdots u_{n-1}}(\mot{w})}(u_n)\:,
\]
for all $u_1u_2\cdots u_n \in \Sigma^*$. 
This shows that $\rho_{\mot{w}}$ can also be described as the output map of a
 letter-to-letter transducer built upon~$\cal G$, see the right of the figure. 
\begin{center}
\TinyPicture
\VCDraw{%
\begin{VCPicture}{(-6,-0.2)(10,0)}
\LargeState
\State[\delta_{\mot{u}}]{(-5,0)}{A}
\State[\delta_{\mot{u}i}]{(0,0)}{B}
\State[\delta_{\mot{u}}]{(4,0)}{C}
\State[\delta_{\mot{u}i}]{(9,0)}{D}
\EdgeL[.5]{C}{D}{\IOL{i}{\rho_{\delta_{\mot{u}}(\mot{w})}(i)}}
\ChgEdgeLabelSep{2}
\EdgeL[.5]{A}{B}{i}
\end{VCPicture}
}
\end{center}

\noindent Now observe that there is only a finite number of possible
different transducers built on $\cal G$, which is equal to the number
of different mappings from $\pres{\dz(  \aut{A})}_+$ to~$\trans_{\Sigma}$. We conclude that $\#
\pres{\aut{A}}_+ \leq \bigl(\# \Sigma \bigr)^{(\#
    \Sigma)\ ( \# \pres{\dz(  \aut{A})}_+)}$.  
\end{proof}

The growth of a Mealy automaton is defined as the growth of the number
of different elements $\rho_{\mot{u}}, \ \mot{u} \in A^n$, as a
function of $n$, see~\cite{brs,growth}. Automata generating finite (semi)groups are those of
finite growth. Looking at the 2-letter
2-state automata, it appears that it is the only growth class within
the known growth classes (finite, polynomial, intermediate and exponential) to
be stable by dualization.

\medskip

Let \(\aut{A}\) be an IR-automaton. Recall that~$\pres{\aut{A}} =
\pres{\aut{A}\sqcup \aut{A}^{-1}}$. In words, considering the states and
their inverses does not modify the generated group. 
We can also
consider the letters and their inverses. Set \(\widetilde{\aut{A}}=
\aut{A}' \sqcup (\aut{A}')^{-1}\) where  \(\aut{A}'=\dz(\dz(\aut{A})
\sqcup \inverse{\dz(\aut{A})})\). The Mealy automaton
\(\widetilde{\aut{A}}\) is the extension of \(\aut{A}\) with stateset
\(\alphA\sqcup\inverse{\alphA}\) and alphabet
\(\alphS\sqcup\inverse{\alphS}\).

Next result is a corollary of Proposition \ref{pr:duale-finitude} and
Lemma~\ref{lm-g-sg-finis}.

\begin{corollary}\label{cor-gen}
Let \(\aut{A}\) be an IR-automaton. The groups
\(\pres{\aut{A}}\) and
\(\pres{\tilde{\aut{A}}}\) are either both finite or both infinite.
\end{corollary}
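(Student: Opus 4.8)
The plan is to establish the corollary by a single chain of equivalences that alternately invokes the group form of Proposition~\ref{pr:duale-finitude} (finiteness is preserved under dualization) and Lemma~\ref{lm-g-sg-finis} (for an IR\nobreakdash-automaton, adjoining the inverses of the states leaves the generated group unchanged). The whole argument hinges on one preliminary bookkeeping observation that I would record first: \emph{both dualization and inversion send IR\nobreakdash-automata to IR\nobreakdash-automata, and a disjoint union of two IR\nobreakdash-automata sharing a common alphabet is again an IR\nobreakdash-automaton}. This is immediate from the definitions in~\eref{eq-dual} and~\eref{eq-inv}, since exchanging or relabelling input and output letters turns permutations into permutations, while a transition function that restricts to a permutation on each of two disjoint blocks of states is itself a permutation of their union. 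With this in hand, every intermediate automaton appearing below is IR, so the two tools legitimately apply at each step.

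First I would note that, $\aut{A}$ being IR, so is $\dz(\aut{A})$; hence the group form of Proposition~\ref{pr:duale-finitude} gives that $\pres{\aut{A}}$ is finite if and only if $\pres{\dz(\aut{A})}$ is. Applying Lemma~\ref{lm-g-sg-finis} to the IR\nobreakdash-automaton $\dz(\aut{A})$ yields $\pres{\dz(\aut{A})} = \pres{\dz(\aut{A}) \sqcup \inverse{\dz(\aut{A})}}$, so finiteness of $\pres{\aut{A}}$ is equivalent to finiteness of $\pres{\aut{C}}$, where I set $\aut{C} = \dz(\aut{A}) \sqcup \inverse{\dz(\aut{A})}$.

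Next I would observe that $\aut{C}$ is an IR\nobreakdash-automaton with stateset $\Sigma \sqcup \Sigma^{-1}$ and alphabet $A$, being the disjoint union of the IR\nobreakdash-automata $\dz(\aut{A})$ and $\inverse{\dz(\aut{A})}$ on the disjoint statesets $\Sigma$ and $\Sigma^{-1}$. Invoking Proposition~\ref{pr:duale-finitude} once more, $\pres{\aut{C}}$ is finite if and only if $\pres{\dz(\aut{C})}$ is; but $\dz(\aut{C}) = \dz\bigl(\dz(\aut{A}) \sqcup \inverse{\dz(\aut{A})}\bigr) = \aut{A}'$ by the very definition of $\aut{A}'$. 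Finally, $\aut{A}'$ is IR, being the dual of the IR\nobreakdash-automaton $\aut{C}$, so a last application of Lemma~\ref{lm-g-sg-finis} gives $\pres{\aut{A}'} = \pres{\aut{A}' \sqcup \inverse{\aut{A}'}} = \pres{\tilde{\aut{A}}}$. Concatenating all the equivalences, $\pres{\aut{A}}$ is finite exactly when $\pres{\tilde{\aut{A}}}$ is, which is the claim.

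The only genuinely delicate points, which I would verify rather than leave implicit, are two. One is that the group form of Proposition~\ref{pr:duale-finitude} is available at each application: it needs both the automaton \emph{and} its dual to be IR, so that Lemma~\ref{lm-g-sg-finis} can pass between $\pres{\cdot}$ and $\presm{\cdot}$ on either side — but this is exactly what the preliminary observation guarantees, IR being stable under dualization. The other, and the main obstacle, is the identification $\dz(\aut{C}) = \aut{A}'$ together with the reversibility of $\aut{C}$: one must check that for each letter $a \in A$ the corresponding transition function of $\aut{C}$ is a permutation of $\Sigma \sqcup \Sigma^{-1}$, which holds because it restricts to a permutation on each of the two disjoint blocks $\Sigma$ and $\Sigma^{-1}$ (the statesets of $\dz(\aut{A})$ and $\inverse{\dz(\aut{A})}$). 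Once these routine verifications against~\eref{eq-dual} and~\eref{eq-inv} are in place, the chain closes without further effort.
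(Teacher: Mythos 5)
Your chain of equivalences --- alternating Proposition~\ref{pr:duale-finitude} and Lemma~\ref{lm-g-sg-finis} along $\aut{A} \to \dz(\aut{A}) \to \dz(\aut{A})\sqcup\inverse{\dz(\aut{A})} \to \aut{A}' \to \widetilde{\aut{A}}$ --- is exactly the route the paper has in mind (it offers no written proof, only the citation of those two results). The problem lies in the ``preliminary bookkeeping observation'' on which you hang everything: it is \emph{not} true that inversion sends IR-automata to IR-automata. The inverse of an IR-automaton is always invertible, but it is reversible only in the bireversible case; reversibility of $\inver(\aut{A})$ is one of the conditions packed into the definition of bireversibility, and Corollary~\ref{cor:jir} exists precisely because IR does not imply bireversible. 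Concretely, take $A=\{a,b\}$, $\Sigma=\{0,1\}$ with transitions $a\stackrel{0|0}{\longrightarrow}a$, $a\stackrel{1|1}{\longrightarrow}b$, $b\stackrel{0|1}{\longrightarrow}b$, $b\stackrel{1|0}{\longrightarrow}a$. This automaton is IR but not bireversible (in the helix graph both $(a,1)$ and $(b,0)$ map to $(b,1)$). Computing $\inverse{\dz(\aut{A})}$, the input letter $a$ sends both states $0^{-1}$ and $1^{-1}$ to $0^{-1}$, so $\inverse{\dz(\aut{A})}$ is not reversible; hence $\aut{C}=\dz(\aut{A})\sqcup\inverse{\dz(\aut{A})}$ is not reversible either (its transition functions preserve the two blocks $\Sigma$ and $\Sigma^{-1}$, and the restriction to $\Sigma^{-1}$ is not a bijection of $\Sigma^{-1}$), and $\aut{A}'=\dz(\aut{C})$ is not even invertible. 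At that point your last two steps collapse: the group form of Proposition~\ref{pr:duale-finitude} cannot be applied to $\aut{C}$ because its dual fails to be invertible, so $\pres{\aut{A}'}$ is not a group in the sense of Definition of Section~\ref{sse-automgroup}, and Lemma~\ref{lm-g-sg-finis} cannot be applied to $\aut{A}'$, whose inverse $(\aut{A}')^{-1}$ is not defined. Your ``routine verification'' (``it restricts to a permutation on each of the two disjoint blocks'') asserts exactly the false point.

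To be fair, the difficulty is partly inherited from the paper: the definition of $\widetilde{\aut{A}}$ at the end of Section~\ref{sse-automgroup} tacitly assumes that $\aut{A}'$ is invertible, which the example above shows can fail for a non-bireversible IR-automaton, so the statement of the corollary is itself delicate outside the bireversible case. When $\aut{A}$ \emph{is} bireversible, all eight transforms of $\aut{A}$ lie in $\perm_A^{\Sigma}\times\perm_{\Sigma}^A$ (suitably relabelled), every automaton in your chain really is IR, and your argument goes through verbatim; and the non-bireversible case is ultimately harmless for the paper's purposes, since such automata generate infinite groups by Corollary~\ref{cor:jir}. But as written your proof has a genuine gap: you must either restrict to bireversible automata, or give a separate treatment (or at least a separate well-definedness argument for $\widetilde{\aut{A}}$) in the merely IR case.
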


The above groups are not necessary equal. Consider for instance the
automaton \(\aut{A}\) generating \(G_{16}^{(9)}\) in
Table~\ref{tbl-examples}: we have  \(|\pres{\aut{A}}|=16\) and
\(|\pres{\widetilde{\aut{A}}}|=64\).

\section{Reduction of Mealy automata and finiteness}
\label{s:reduction}

Here we define the \emph{$\mz\dz$-reduction} of Mealy automata which
provides a sufficient condition of finiteness. The condition is not
necessary and two counterexamples are provided. 

\subsection{Minimization of a Mealy automaton}\label{quo-auto}
\begin{definition}
Let $\Ac=(A,\Sigma,\delta,\rho)$ be a Mealy automaton.
An equivalence $\equiv$ on $A$ is a \emph{congruence} for $\Ac$ if
\[
\left[\forall x,y\in A,\ x\equiv y\right] \Longrightarrow
\left[\forall i\in\Sigma,\ \rho_x(i)=\rho_y(i)\text{ and }
\delta_i(x)\equiv\delta_i(y)\right].
\]
The \emph{Nerode equivalence} on $A$ is the coarsest congruence for $\Ac$.
\end{definition}
The Nerode equivalence is the limit of the sequence $(\equiv_k)$ of
increasingly finer equivalences defined recursively by:
\begin{align*}
\forall x,y\in A,\qquad\qquad x\equiv_0 y & \Longleftrightarrow \forall i\in\Sigma,\ \rho_x(i)=\rho_y(i),\\
\forall k\geqslant 0, x\equiv_{k+1} y & \Longleftrightarrow x\equiv_k y
 \text{ and }\forall i\in\Sigma,\ \delta_i(x)\equiv_k\delta_i(y).
\end{align*}
Since the set $A$ is finite, this sequence is ultimately constant; moreover if
two consecutive equivalences are equal, the sequence remains constant from this
point. The limit is therefore computable.
For every $x$ in $A$, we denote by $[x]$ the class of $x$ w.r.t. the Nerode equivalence.

\begin{definition}
Let $\Ac=(A,\Sigma,\delta,\rho)$ be a Mealy automaton
and let $\equiv$ be the Nerode equivalence on  $\Ac$.
The \emph{minimization} of $\Ac$ is the Mealy automaton
\(\Ac/\negthickspace\equiv\,=(A/\negthickspace\equiv,\Sigma,\tilde{\delta},\tilde{\rho})\),
where for every $(x,i)$ in $A\times \Sigma$,
$\tilde{\delta}_i([x])=[\delta_i(x)]$ and
$\tilde{\rho}_{[x]}(i)=\rho_x(i)$.
\end{definition}

This definition is consistent with the minimization of ``deterministic
finite automata'', where instead of considering the production functions $(\rho_x)_x$, the computation of the congruence is
initiated by the separation between terminal and non-terminal states.

\begin{lemma}\label{lem-min}
Let $\Ac=(A,\Sigma,\delta,\rho)$ be a Mealy automaton,
and let $\Ac/\negthickspace\equiv$ be its minimization.
The function on $\Sigma^*$ generated by~$x$ in~$\Ac$
is equal to the function generated by~$[x]$ in~$\Ac/\equiv$.
Therefore, the Mealy automata $\Ac$ and $\Ac/\equiv$ generate the same
semigroup.
\end{lemma}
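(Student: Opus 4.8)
The plan is to prove the statement about production functions directly by induction on word length, and then derive the equality of generated semigroups as an immediate consequence. The key observation is that the Nerode congruence is designed precisely so that equivalent states produce identical outputs and have equivalent successors; I want to turn this into the statement that equivalent states produce the \emph{same} function on all of $\Sigma^*$.

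First I would fix $x \in A$ and show by induction on $n$ that for every word $\mot{u} \in \Sigma^n$, we have $\rho_x(\mot{u}) = \tilde\rho_{[x]}(\mot{u})$, where $\tilde\rho_{[x]}$ denotes the production function of $[x]$ in $\Ac/\negthickspace\equiv$. The base case $n=0$ is the empty word, handled by convention. For the inductive step, I would write $\mot{u} = i\mot{v}$ with $i \in \Sigma$ and $\mot{v} \in \Sigma^{n-1}$, and apply the fundamental recursion \eref{eq-property} on both sides:
\[
\rho_x(i\mot{v}) = \rho_x(i)\,\rho_{\delta_i(x)}(\mot{v}), \qquad
\tilde\rho_{[x]}(i\mot{v}) = \tilde\rho_{[x]}(i)\,\tilde\rho_{\tilde\delta_i([x])}(\mot{v}).
\]
By the definition of the minimization, $\tilde\rho_{[x]}(i) = \rho_x(i)$ and $\tilde\delta_i([x]) = [\delta_i(x)]$, so the two first letters agree and the recursion on the tail reduces to comparing $\rho_{\delta_i(x)}(\mot{v})$ with $\tilde\rho_{[\delta_i(x)]}(\mot{v})$, which holds by the induction hypothesis applied to the state $\delta_i(x)$. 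This closes the induction and shows $\rho_x = \tilde\rho_{[x]}$ as functions on $\Sigma^*$.

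The point I would be most careful about is that the induction is really over all states simultaneously: the inductive hypothesis at length $n-1$ must be available for the state $\delta_i(x)$, not just for $x$ itself. This is why I phrase the statement as ``for all $x \in A$, for all $\mot{u}$ of length $n$'' and induct on $n$, rather than fixing a single $x$ from the start. The well-definedness of $\tilde\delta$ and $\tilde\rho$ on congruence classes is exactly what the congruence property guarantees, so no additional check is needed there.

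Finally, the equality of the semigroups is a formal consequence. Since $\presm{\Ac} = \{\rho_{\mot{w}} : \mot{w} \in A^*\}$ is generated by the $\rho_x$ for $x \in A$, and each generator $\rho_x$ equals the corresponding generator $\tilde\rho_{[x]}$ of $\presm{\Ac/\negthickspace\equiv}$, the two generating sets coincide as sets of transformations of $\Sigma^*$; hence they generate the same subsemigroup of $\trans_{\Sigma^*}$, giving $\presm{\Ac} = \presm{\Ac/\negthickspace\equiv}$. I expect the induction itself to be routine; the only genuine subtlety is making sure the inductive formulation ranges over all states at once so that the recursion on $\delta_i(x)$ is legitimate.
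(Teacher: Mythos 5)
Your proof is correct and follows essentially the same route as the paper: an induction on word length over all states simultaneously, using the recursion \eref{eq-property} together with the defining identities $\tilde{\rho}_{[x]}(i)=\rho_x(i)$ and $\tilde{\delta}_i([x])=[\delta_i(x)]$. The only cosmetic difference is that you peel off the first letter (so the induction hypothesis is invoked for the successor state $\delta_i(x)$), while the paper peels off the last letter (invoking the hypothesis for $x$ itself and the single-letter agreement for the shifted state); both are valid and the subtlety you flag is handled correctly.
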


\begin{proof}
Let $(\mot{i}_n)_{n\in\N}$ be a sequence of words of
$\Sigma^*$ such that for all integer \(n\), the length of
\(\mot{i}_n\) is \(n\) and \(\mot{i}_n\) is a prefix of
\(\mot{i}_{n+1}\): \(\mot{i}_{n+1} = \mot{i}_ni_{n+1}\), where
\(i_{n+1}\in\Sigma\). We prove by induction on $n$ that for every
 $x$ of $A$, we have $\rho_x = \tilde{\rho}_{[x]}$ on $\Sigma^n$. 
It is obviously true
for $n=0$. If $n>0$:
\begin{align*}
\rho_x(\mot{i}_n)=&\rho_x(\mot{i}_{n-1})\rho_{\delta_{\mot{i}_{n-1}}(x)}(i_n)\\
=&\tilde{\rho}_{[x]}(\mot{i}_{n-1})\tilde{\rho}_{[\delta_{\mot{i}_{n-1}}(x)]}(i_n)\\
=&\tilde{\rho}_{[x]}(\mot{i}_{n-1})\tilde{\rho}_{\tilde{\delta}_{\mot{i}_{n-1}}([x])}(i_n)
=\tilde{\rho}_{[x]}(\mot{i}_n).
\end{align*}
\end{proof}

\subsection{The $\mz\dz$-reduction of Mealy automata}

Observe that the minimization of a Mealy automaton with a minimal dual
can make the dual automaton non-minimal.

\begin{definition}
A pair of dual Mealy automata is \emph{reduced} if both Mealy automata
are minimal. 
Let~$\mz$ be the operation of minimization; recall that $\dz$ is the
operation of dualization. The \emph{$\mz\dz$-reduction} of a Mealy
automaton consists in minimizing the automaton or its dual until the
resulting pair of dual Mealy automata is reduced. 
\end{definition}

If both a Mealy automaton and its dual automaton are non-minimal,
the procedure of $\mz\dz$-reduction seems to be dependent on the first
automaton chosen for the minimization. The reduction is actually confluent:

\begin{proposition}\label{prop:paire}
If $(\Ac,\Bc)$ is a pair of dual Mealy automata, the reduced pair
obtained by minimizing $\Ac$ first  is the same as the one obtained by
minimizing $\Bc$ first.
\end{proposition}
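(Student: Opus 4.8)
The plan is to recast the entire $\mz\dz$-reduction as the iteration of one monotone operator on equivalence relations; once this is done, the two orders of minimization will visibly compute the same fixed point. To this end I would describe a pair of dual automata by a pair of equivalences. Writing $\Ac=(A,\Sigma,\delta,\rho)$ and $\Bc=\dz(\Ac)=(\Sigma,A,\rho,\delta)$, minimizing $\Ac$ merges states of $A$, whereas minimizing $\Bc$ merges its states, namely the letters of $\Sigma$. Accordingly I would follow the current pair through a pair $(\sigma,\tau)$ of equivalences, $\sigma$ on $A$ and $\tau$ on $\Sigma$, representing the quotient of $\Ac$ with stateset $A/\sigma$ and alphabet $\Sigma/\tau$. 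For this quotient to be a genuine Mealy automaton the pair must be \emph{compatible}, in the sense that $x\sigma y$ and $i\tau j$ imply $\delta_i(x)\sigma\delta_j(y)$ and $\rho_x(i)\tau\rho_y(j)$; the equality pair $(\Delta_A,\Delta_\Sigma)$ is compatible, and minimizing either automaton of a compatible pair yields a compatible pair.

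Next I would pin down what a single minimization does. Minimizing the quotient $\Ac$ attached to a compatible pair $(\sigma,\tau)$ computes its Nerode equivalence, i.e. the limit $\hat\sigma(\tau)$ of the refining sequence given by $x\,\sigma^{(0)}\,y \iff \forall i\in\Sigma,\ \rho_x(i)\,\tau\,\rho_y(i)$ and $x\,\sigma^{(k+1)}\,y \iff x\,\sigma^{(k)}\,y \text{ and } \forall i\in\Sigma,\ \delta_i(x)\,\sigma^{(k)}\,\delta_i(y)$. Since any equivalence compatible with $\tau$ is contained in every $\sigma^{(k)}$, the outcome $\hat\sigma(\tau)$ is the coarsest equivalence compatible with $\tau$; in particular it depends only on $\tau$, not on the $\sigma$ one started from. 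Symmetrically, minimizing $\Bc$ sends $(\sigma,\tau)$ to $(\sigma,\hat\tau(\sigma))$.

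The crux of the argument is then the monotonicity of $\hat\sigma$ and $\hat\tau$: if $\tau_1\subseteq\tau_2$ then $\hat\sigma(\tau_1)\subseteq\hat\sigma(\tau_2)$. This I would prove by comparing the two refining sequences level by level — the base inclusion of $\sigma^{(0)}$ for $\tau_1$ inside $\sigma^{(0)}$ for $\tau_2$ is immediate from $\tau_1\subseteq\tau_2$, and it propagates through the inductive clause — so that the limits obey the same inclusion. I expect this monotonicity step to be the main obstacle, since it is exactly what prevents the reduction from stalling at different pairs depending on which automaton is minimized first.

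Finally I would conclude with a fixed-point argument. Put $\theta=\hat\tau\circ\hat\sigma$, a monotone self-map of the finite lattice of equivalences on $\Sigma$. Minimizing $\Ac$ first produces the $\tau$-values $\Delta_\Sigma,\ \theta(\Delta_\Sigma),\ \theta^2(\Delta_\Sigma),\dots$, an increasing chain converging to the least fixed point $\tau^\ast$ of $\theta$. Minimizing $\Bc$ first produces $\hat\tau(\Delta_A),\ \theta(\hat\tau(\Delta_A)),\dots$; from $\Delta_A\subseteq\hat\sigma(\Delta_\Sigma)$ and monotonicity of $\hat\tau$ one gets $\Delta_\Sigma\subseteq\hat\tau(\Delta_A)\subseteq\theta(\Delta_\Sigma)$, whence $\theta^k(\Delta_\Sigma)\subseteq\theta^k(\hat\tau(\Delta_A))\subseteq\theta^{k+1}(\Delta_\Sigma)$, so this second chain is squeezed onto the same limit $\tau^\ast$. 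Both orders therefore terminate with the reduced pair $(\hat\sigma(\tau^\ast),\tau^\ast)$, which is the assertion.
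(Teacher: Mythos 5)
Your argument is correct, but it proceeds along a genuinely different route from the paper's. The paper proves the statement as a local-confluence result: it performs one minimization step on each side, checks that the congruence $\equiv_1$ obtained from minimizing $\Ac$ is still a congruence on the automaton obtained by minimizing $\Bc$ (and vice versa), exhibits a common quotient pair $(\Ac',\Bc')$ reachable from either one-step reduct, and then closes the diamond by induction on the total number of states --- essentially Newman's lemma with termination supplied by the strictly decreasing state count. You instead globalize the whole $\mz\dz$-reduction into order theory: a pair of dual automata becomes a compatible pair of equivalences $(\sigma,\tau)$, each minimization becomes a monotone operator ($\hat\sigma$ depending only on $\tau$, $\hat\tau$ only on $\sigma$), and both orders of reduction are shown to compute the least fixed point of $\theta=\hat\tau\circ\hat\sigma$ on the finite lattice of equivalences on $\Sigma$, via the sandwich $\theta^k(\Delta_\Sigma)\subseteq\theta^k(\hat\tau(\Delta_A))\subseteq\theta^{k+1}(\Delta_\Sigma)$. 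The two key lemmas you isolate (that $\hat\sigma(\tau)$ is independent of $\sigma$, and that $\hat\sigma$, $\hat\tau$ are monotone) do hold, with the level-by-level comparison of the refining sequences carrying the weight; the one point to spell out more carefully is that the limit of your refining sequence is itself compatible with $\tau$ in your strong sense (quantifying over all $i\mathbin{\tau}j$, not just $i=j$), which requires using that the pair one starts from is compatible and that $\sigma$ is contained in the limit. What your approach buys is an intrinsic description of the reduced pair --- the coarsest compatible pair of equivalences, i.e.\ the least fixed point --- and, as a by-product, confluence for \emph{any} interleaving of minimizations rather than just the two canonical orders; what the paper's approach buys is a shorter, more elementary argument that reuses only the congruence machinery already set up for Lemma~\ref{lem-min}.
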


\begin{proof}
If $(\Ac,\Bc)$ is reduced, both Mealy automata are minimal,
and the proposition trivially holds.

Otherwise, the proof is by induction on the total number of states in
$\Ac$ and~$\Bc$. Let $(\Ac_1,\Bc_1)$ be the pair obtained by minimizing
$\Ac$
and let $(\Ac_2,\Bc_2)$ be the pair obtained by minimizing
$\Bc$.
Let us set $\Ac=(A,\Sigma,\delta,\rho)$,
$\Ac_1=(A_1,\Sigma,\delta^{(1)},\rho^{(1)})$,
and $\Ac_2=(A,\Sigma_2,\delta^{(2)},\rho^{(2)})$.
Let $\equiv_1$ and $\equiv_2$ be the congruences on $\Ac$ 
and $\Bc$ such that $A_1=A/\negthickspace\equiv_1$ and $\Sigma_2=\Sigma/\negthickspace\equiv_2$.
We show that $\equiv_1$ is a congruence on $\Ac_2$.
Let $x$ and $y$ be in $A$ such that $x\equiv_1 y$.
Then, for every $i$ in $\Sigma$, $\rho_x(i)=\rho_y(i)$ and
therefore, $\rho^{(2)}_x([i])=[\rho_x(i)]=[\rho_y(i)]=\rho^{(2)}_y([i])$;
besides,
$\delta^{(2)}_{[i]}(x)=\delta_i(x)\equiv_1\delta_i(y)=\delta^{(2)}_{[i]}(y)$.
Hence, $\equiv_1$ is a congruence on $\Ac_2$ and, likewise,
$\equiv_2$ is a congruence on $\Bc_1$.
We consider now the Mealy automaton
$\Ac'=(A_1,\Sigma_2,\delta',\rho')$ which is the quotient of $\Ac_2$
with respect to $\equiv_1$, and $\Bc'=(\Sigma_2,A_1,\rho",\delta")$
which is the quotient of $\Bc_1$ w.r.t. $\equiv_2$.
For every $x$ in $A$ and every $i$ in $\Sigma$,
it holds:
\[
\delta"_{[i]}([x])=\delta^{(1)}_i([x])=
[\delta_i(x)]=[\delta^{(2)}_{[i]}(x)]=\delta'_{[i]}([x])\:.
\]
Thus, $\delta"=\delta'$ and likewise $\rho"=\rho'$.

\begin{center}
\VCDraw{\begin{VCPicture}{(-3.5,-3.4)(3.5,3.3)}%
\ChgStateLineStyle{none}
\StateVar[\Ac,\Bc]{(0,3)}{A}
\StateVar[\Ac_1,\Bc_1]{(-2,1)}{A1}
\StateVar[\Ac_2,\Bc_2]{(2,1)}{A2}
\StateVar[\Ac',\Bc']{(0,-1)}{AP}
\StateVar[\Ac_3,\Bc_3]{(-3,-3)}{A3}
\StateVar[\Ac_4,\Bc_4]{(3,-3)}{A4}
\EdgeR{A}{A1}{\equiv_1:\mz(\Ac)}
\EdgeL{A}{A2}{\equiv_2:\mz(\Bc)}
\EdgeR{A1}{AP}{\equiv_2}
\EdgeL{A2}{AP}{\equiv_1}
\ArcL{AP}{A3}{}\LabelR{\mz(\Bc')}
\ArcR{AP}{A4}{}\LabelL{\mz(\Ac')}
\LArcR{A1}{A3}{\mz(\Bc_1)}
\LArcL{A2}{A4}{\mz(\Ac_2)}
\end{VCPicture}}
\end{center}

\noindent Consider now $\Ac_2=(A,\Sigma_2, \delta^{(2)}, \rho^{(2)})$ and
$\Ac'=(A_1,\Sigma_2, \delta', \rho')$. Clearly, applying the coarsest
congruences respectively on $A$ in \(\Ac_2\) and $A_1$ in \(\Ac'\)
will result in the same 
minimized Mealy automaton $\Ac_4$. The minimized Mealy automaton
$\Bc_3$ is defined 
similarly starting from either $\Bc_1$ or $\Bc'$. Let $\Bc_4$ be the
dual of $\Ac_4$, and let $\Ac_3$ be the dual of $\Bc_3$. 
By construction, the pair $(\Ac_3,\Bc_3)$ (resp. $(\Ac_4,\Bc_4)$) is the one obtained from
$(\Ac,\Bc)$ by minimizing first $\Ac$ (resp. $\Bc$) then $\Bc$
(resp. $\Ac$). But the pair $(\Ac_3,\Bc_3)$ (resp. $(\Ac_4,\Bc_4)$) is
also the one obtained by applying one minimization step starting from
$(\Ac',\Bc')$. Observe that the pair $(\Ac',\Bc')$ has a number of
states strictly smaller than the one of $(\Ac,\Bc)$. By induction
hypothesis, starting from $(\Ac',\Bc')$, 
the $\mz\dz$-reduction does not depend on the
first minimization step, which proves the result.
\end{proof}

\subsection{A sufficient condition for finiteness}

A trivial Mealy automaton  is a Mealy automaton with one state over a
one-letter alphabet. It clearly generates the trivial group.

\begin{theorem}\label{prop-red-finite}
If the $\mz\dz$-reduction of a Mealy automaton (\resp an invertible Mealy automaton)
leads to a trivial Mealy automaton,
then the automaton generates a finite semigroup (\resp a finite group).
\end{theorem}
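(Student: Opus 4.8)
The plan is to show that finiteness of the generated semigroup is an invariant of every elementary step of the $\mz\dz$-reduction, and then to read off the conclusion from the terminal trivial automaton. The two ingredients I would combine are Lemma~\ref{lem-min}, which says that minimizing a Mealy automaton leaves the generated semigroup \emph{unchanged}, and Proposition~\ref{pr:duale-finitude}, which says that a Mealy automaton generates a finite semigroup if and only if its dual does.

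First I would view the $\mz\dz$-reduction as a finite sequence of elementary steps, each of which replaces the current pair of dual automata $(\Ac,\dz(\Ac))$ either by $(\mz(\Ac),\dz(\mz(\Ac)))$ or by $(\dz(\mz(\dz(\Ac))),\mz(\dz(\Ac)))$; that is, each step minimizes one of the two components (the pair being kept in the form $(\,\cdot\,,\dz(\,\cdot\,))$). I would then check that the property ``$\presm{\Ac}$ is finite'' is preserved by every such step. If the step minimizes $\Ac$, then $\presm{\mz(\Ac)}=\presm{\Ac}$ by Lemma~\ref{lem-min}, so the property is literally unchanged on the first component. If the step minimizes $\dz(\Ac)$, then $\presm{\mz(\dz(\Ac))}=\presm{\dz(\Ac)}$ by Lemma~\ref{lem-min}, and since $\presm{\dz(\Ac)}$ is finite exactly when $\presm{\Ac}$ is finite by Proposition~\ref{pr:duale-finitude}, the property is again preserved. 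Hence along the whole reduction, $\presm{\aut{A}}$ is finite if and only if the semigroup generated by the terminal automaton is finite.

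Since the reduction is assumed to terminate at a trivial Mealy automaton, whose generated semigroup is trivial and hence finite, the invariance just established forces $\presm{\aut{A}}$ to be finite, which settles the semigroup statement. For the invertible case I would first apply this same conclusion to obtain that $\presm{\aut{A}}$ is finite, and then upgrade from semigroup to group exactly as in the proof of Lemma~\ref{lm-g-sg-finis}: because $\aut{A}$ is invertible, each production function $\rho_x$ is a bijection and $\pres{\aut{A}}$ is a genuine group, so finiteness of $\presm{\aut{A}}$ yields a relation $x^{n+k}=x^k$ for every element $x$, whence $x^{-1}=x^{n-1}\in\presm{\aut{A}}$ and $\pres{\aut{A}}=\presm{\aut{A}}$ is finite.

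The argument is short once the two cited results are in hand, so there is no real analytic obstacle; the only point requiring care is the bookkeeping of \emph{which} quantity is propagated through the alternation. Minimization preserves the generated semigroup exactly, but it changes the dual, so one must not claim that the dual's semigroup is preserved — only its finiteness is, and this is precisely what Proposition~\ref{pr:duale-finitude} supplies. Stating the invariant symmetrically (finiteness of the semigroup of either component, the two being equivalent) is what makes the induction on reduction steps go through cleanly; note that the confluence of Proposition~\ref{prop:paire} is not needed here, since the invariant is preserved along \emph{any} sequence of steps.
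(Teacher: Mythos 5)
Your proof is correct and follows essentially the same route as the paper's: both propagate finiteness along the reduction sequence by alternating Lemma~\ref{lem-min} (minimization preserves the semigroup) with Proposition~\ref{pr:duale-finitude} (finiteness passes to and from the dual), starting from the trivial terminal automaton. Your explicit treatment of the group case via the $x^{n+k}=x^k$ argument of Lemma~\ref{lm-g-sg-finis} is a detail the paper leaves implicit, but it introduces no new idea.
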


\begin{proof}
Let $(\Ac,\Bc)$ be a pair of dual Mealy automata and
assume that there exists a sequence of dual Mealy automata
$((\Ac_k,\Bc_k))_{k\in[0,m]}$ such that $(\Ac_0,\Bc_0)=(\Ac,\Bc)$,
$(\Ac_m,\Bc_m)$ is trivial and, for every $k\in[1,m]$,
either $\Ac_k$ is the minimization of $\Ac_{k-1}$
or $\Bc_k$ is the minimization of $\Bc_{k-1}$.

By Proposition~\ref{pr:duale-finitude},
for every $k$, if $\Ac_k$ or $\Bc_k$ generates a finite semigroup, both
automata do.
Obviously, $\Ac_m$ and $\Bc_m$ both generate the trivial group.
We prove that if $\Ac_k$ generates a finite semigroup, so does $\Ac_{k-1}$.
If $\Ac_k$ is the minimization of $\Ac_{k-1}$,
by Lemma~\ref{lem-min}, they both generate the same semigroup.
Otherwise, $\Bc_k$ is the minimization of $\Bc_{k-1}$.
Then $\Bc_k$ generates a finite
semigroup (Prop.~\ref{pr:duale-finitude}),
so does $\Bc_{k-1}$ (Lem.~\ref{lem-min}), and thus
$\Ac_{k-1}$ (Prop.~\ref{pr:duale-finitude}). 
Therefore $\Ac$ generates a finite semigroup.
\end{proof}

Let \(\aut{A}\) be the following automaton:
\begin{center}
\MediumPicture\VCDraw{%
\begin{VCPicture}{(-1,-2.5)(6,2.5)}
\State[a]{(0,0)}{A} \State[b]{(5,0)}{B}
\ArcL{A}{B}{\StackTwoLabels{\IOL{0}{1}}{\IOL{2}{3}}}
\ArcL{B}{A}{\StackTwoLabels{\IOL{0}{3}}{\IOL{2}{1}}}
\LoopN[.2]{A}{\StackTwoLabels{\IOL{1}{0}}{\IOL{3}{2}}}
\LoopN[.8]{B}{\StackTwoLabels{\IOL{1}{0}}{\IOL{3}{2}}}
\end{VCPicture}}
\end{center}

Let us compute the $\mz\dz$-reduced automaton of~\(\aut{A}\).

\begin{center}
\SmallPicture
\FixVCGridScale{.8}
\VCDraw{%
\begin{VCPicture}{(-3,-32)(26,-2)}
\State[a]{(0,-8)}{AA} \State[b]{(6,-8)}{BB}
\ArcL{AA}{BB}{\StackTwoLabels{\IOL{0}{1}}{\IOL{2}{3}}}
\ArcL{BB}{AA}{\StackTwoLabels{\IOL{0}{3}}{\IOL{2}{1}}}
\LoopN[.2]{AA}{\StackTwoLabels{\IOL{1}{0}}{\IOL{3}{2}}}
\LoopN[.8]{BB}{\StackTwoLabels{\IOL{1}{0}}{\IOL{3}{2}}}
\Point{(10.5,-8)}{E2} \Point{(12.5,-8)}{F2}
\EdgeL{E2}{F2}{{\mathfrak d}}
%
%
\State[0]{(17,-3)}{A0} \State[1]{(23,-3)}{A1}
\State[3]{(17,-9)}{A3} \State[2]{(23,-9)}{A2}
\ArcL{A1}{A0}{\StackTwoLabels{\IOL{a}{a}}{\IOL{b}{b}}}
\ArcL{A0}{A1}{\IOL{a}{b}}
\ArcL{A3}{A2}{\StackTwoLabels{\IOL{a}{a}}{\IOL{b}{b}}}
\ArcL{A2}{A3}{\IOL{a}{b}}
\EdgeR{A0}{A3}{\IOL{b}{a}}
\EdgeR{A2}{A1}{\IOL{b}{a}}
\Point{(20,-11)}{E3} \Point{(20,-13)}{F3}
\EdgeL{E3}{F3}{{\mathfrak m}}
%
%
\StateVar[13]{(17,-16)}{A13} \StateVar[02]{(23,-16)}{A02}
\ArcL{A13}{A02}{\StackTwoLabels{\IOL{a}{a}}{\IOL{b}{b}}}
\ArcL{A02}{A13}{\StackTwoLabels{\IOL{a}{b}}{\IOL{b}{a}}}
\Point{(12.5,-16)}{E4} \Point{(10.5,-16)}{F4}
\EdgeL{E4}{F4}{{\mathfrak d}}
%
%
\State[a]{(0,-16)}{AAA} \State[b]{(6,-16)}{BBB}
\ArcL{AAA}{BBB}{\IOL{02}{13}}
\ArcL{BBB}{AAA}{\IOL{02}{13}}
\LoopN[.2]{AAA}{\IOL{13}{02}}
\LoopN[.8]{BBB}{\IOL{13}{02}}
\Point{(3,-18.5)}{E5} \Point{(3,-20.5)}{F5}
\EdgeL{E5}{F5}{{\mathfrak m}}
%
%
\StateVar[ab]{(3,-23.5)}{X}
\LoopN[.2]{X}{\StackTwoLabels{\IOL{13}{02}}{\IOL{02}{13}}}
\Point{(10.5,-23.5)}{E6} \Point{(12.5,-23.5)}{F6}
\EdgeL{E6}{F6}{{\mathfrak d}}
%
%
\StateVar[13]{(17,-23.5)}{AAAA} \StateVar[02]{(23,-23.5)}{BBBB}
\ArcL{AAAA}{BBBB}{\IOL{ab}{ab}}
\ArcL{BBBB}{AAAA}{\IOL{ab}{ab}}
\Point{(20,-26)}{E7} \Point{(20,-28)}{F7}
\EdgeL{E7}{F7}{{\mathfrak m}}
%
%
\StateVar[0123]{(20,-31)}{Y}
\LoopN[.2]{Y}{\IOL{ab}{ab}}
\Point{(12.5,-31)}{E8} \Point{(10.5,-31)}{F8}
\EdgeL{E8}{F8}{{\mathfrak d}}
%
%
\StateVar[ab]{(3,-31)}{Z}
\LoopN[.2]{Z}{\IOL{0123}{0123}}
\end{VCPicture}}
\end{center}

The group generated by~$\aut{A}$ is finite and can be shown to
be isomorphic to~$G_{16}^{(9)}$, that is, the group of order~16 with
presentation\[\langle~a,b:a^4=b^4=abab=1,ab^3=ba^3~\rangle.\]

\medskip

Now consider the family~$(\aut{M}^{\sharp}_{p,q})$ of bireversible $p$-letter $q$-state Mealy automata:
\begin{center}
\SmallPicture
\FixVCGridScale{3}
\VCDraw{\begin{VCPicture}{(-1.5,-1.35)(1.5,1.3)}
\State[a_1]{(-.62,.78)}{A1} \State[a_q]{(.22,.97)}{AQ}
\State[a_2]{(-1,0)}{A2}  \State{(.9,-.43)}{A5}\State{(.9,.43)}{A6}
\State[a_3]{(-.62,-.78)}{A3} \State[a_4]{(.22,-.97)}{A4} 
\EdgeR[.7]{A1}{A2}{\IOL{i}{i+1},\IOL{p}{1}}
\EdgeR[.3]{A2}{A3}{\StackThreeLabels{\IOL{1}{1}}{\IOL{i}{i+1},\IOL{p}{2}}{(i\neq 1)\ \ \ }}
\EdgeR{A3}{A4}{\IOL{i}{i}}
\EdgeR{A4}{A5}{\IOL{i}{i}}
\EdgeR{A6}{AQ}{\IOL{i}{i}}
\EdgeR{AQ}{A1}{\IOL{i}{i}}
\SetEdgeLineStyle{dotted}
\EdgeR{A5}{A6}{\IOL{i}{i}}
\RstEdgeLineStyle
\end{VCPicture}}
\end{center}
\noindent One can check that $(\dz\mz\dz\mz)(\aut{M}^{\sharp}_{p,q})$ is
trivial for any~$p$ and~$q$. Hence by Theorem~\ref{prop-red-finite},
the groups $\pres{\aut{M}^{\sharp}_{p,q}}$ are all finite. 
In fact and independently, the group \(\pres{\aut{M}^{\sharp}_{p,q}}\)
can be identified with \(\perm_q^p\).
For comparison, the packages~$\FR$ and~$\SK$ both fail to decide finiteness of~$\pres{\aut{M}^{\sharp}_{p,q}}$ (except for very small values of~$p,q$).

\subsection{This sufficient condition is not necessary} 

\noindent The following Mealy automaton is $\mz\dz$-reduced, but it
generates a finite semigroup of order 6:
it provides a counterexample to the converse of
Theorem~\ref{prop-red-finite}. 
\begin{center}
\SmallPicture\VCDraw{%
\begin{VCPicture}{(-1.5,-0.5)(6,2)}
\State[a]{(0,0)}{A} \State[b]{(5,0)}{B}
\EdgeL{B}{A}{\IOL{1}{0}}
\LoopN[.2]{A}{\StackTwoLabels{\IOL{0}{1}}{\IOL{1}{1}}}
\LoopN[.8]{B}{\IOL{0}{1}}
\end{VCPicture}}
\end{center}

\medskip\noindent There also exist counterexamples among bireversible Mealy automata. 
Consider the order~8 dihedral group viewed as generated by a reflection~$\s$ and by a product~$\m=\rho\s$ with a rotation:
	\[D_4= \langle~\s,\m:\s^2=\m^2=(\s\m)^4=1~\rangle .\]
It is generated by the bireversible Mealy automaton of
Fig.~\ref{fig-dihedral-automaton}. This ad-hoc automaton is its own dual
and is $\mz\dz$-reduced. 

\begin{figure}[h!]
\SmallPicture\VCDraw{%
\begin{VCPicture}{(2,-2.5)(26,11)}
\StateVar[1]{(8,.5)}{ID}
\StateVar[\m\s\m]{(8,6)}{MSM}
\StateVar[\s\m\s\m]{(5,0)}{SMSM}
\StateVar[\s]{(5,8)}{S}
\StateVar[\m]{(11,8)}{M}
\StateVar[\s\m]{(11,0)}{SM}
\StateVar[\m\s]{(24,8)}{MS}
\StateVar[\s\m\s]{(24,0)}{SMS}
\ChgEdgeLabelScale{.75}
\ChgEdgeLabelSep{.2}
\ForthBackOffset%
\LoopVarN[.3]{ID}{\forall x, \ \IOL{x}{x}}
\LoopVarN[.5]{MSM}{\StackEightLabels
			{\IOL{1}{1}}{\IOL{\s}{\s}}{\IOL{\m}{\s\m\s}}{\IOL{\s\m}{\m\s}}
			{\IOL{\m\s}{\s\m}}{\IOL{\s\m\s}{\m}}{\IOL{\m\s\m}{\m\s\m}}{\IOL{\s\m\s\m}{\s\m\s\m}}}
\EdgeL{S}{SMSM}{\StackFourLabels{\IOL{\m}{\m\s}}{\IOL{\s\m}{\s\m\s}}{\IOL{\m\s}{\m}}{\IOL{\s\m\s}{\s\m}}}
\EdgeL{SMSM}{S}{\StackFourLabels{\IOL{\m}{\s\m}}{\IOL{\s\m}{\m}}{\IOL{\m\s}{\s\m\s}}{\IOL{\s\m\s}{\m\s}}}
\EdgeL[.6]{M}{SM}{\StackTwoLabels		{\IOL{\s\m\s}{\m\s}}	{\IOL{\s\m\s\m}{\s}}}
\EdgeL[.6]{M}{MS}{\StackTwoLabels		{\IOL{\s}{\s\m\s\m}}	{\IOL{\m\s}{\s\m\s}}}
\EdgeL[.3]{SM}{M}{\StackTwoLabels		{\IOL{\s\m\s}{\s\m}}	{\IOL{\s\m\s\m}{\s}}}
\EdgeL[.6]{SM}{SMS}{\StackTwoLabels	{\IOL{\s}{\s\m\s\m}}	{\IOL{\m\s}{\m}}}
\EdgeL[.6]{MS}{M}{\StackTwoLabels		{\IOL{\s}{\s\m\s\m}}		{\IOL{\s\m}{\s\m\s}}}
\EdgeL[.3]{MS}{SMS}{\StackTwoLabels	{\IOL{\m}{\m\s}}		{\IOL{\s\m\s\m}{\s}}}
\EdgeL[.6]{SMS}{SM}{\StackTwoLabels	{\IOL{\s}{\s\m\s\m}}		{\IOL{\s\m}{\m}}}
\EdgeL[.6]{SMS}{MS}{\StackTwoLabels	{\IOL{\m}{\s\m}}		{\IOL{\s\m\s\m}{\s}}}
\LoopVarN[.3]{M}{\StackTwoLabels		{\IOL{1}{1}}		{\IOL{\m}{\m}}}
\LoopVarS[.3]{SM}{\StackTwoLabels		{\IOL{1}{1}}		{\IOL{\m}{\s\m\s}}}
\LoopVarN[.7]{MS}{\StackTwoLabels		{\IOL{1}{1}}		{\IOL{\s\m\s}{\m}}}
\LoopVarS[.7]{SMS}{\StackTwoLabels		{\IOL{1}{1}}		{\IOL{\s\m\s}{\s\m\s}}}

\ChgEdgeLabelSep{-2}
\EdgeL[.7]{M}{SMS}{\StackTwoLabels		{\IOL{\m\s\m}{\m\s\m}}	{\IOL{\s\m}{\s\m}}}
\EdgeL[.75]{SMS}{M}{\StackTwoLabels		{\IOL{\m\s}{\m\s}}		{\IOL{\m\s\m}{\m\s\m}}}
\EdgeL[.7]{SM}{MS}{\StackTwoLabels		{\IOL{\m\s\m}{\m\s\m}}	{\IOL{\s\m}{\m\s}}}
\EdgeL[.7]{MS}{SM}{\StackTwoLabels		{\IOL{\m\s}{\s\m}}		{\IOL{\m\s\m}{\m\s\m}}}
\ChgEdgeLabelSep{-3}
\LoopVarN[.2]{S}{\StackFourLabels{\IOL{\s\m\s\m}{\s\m\s\m}}{\IOL{\m\s\m}{\m\s\m}}{\IOL{\s}{\s}}{\IOL{1}{1}}}
\LoopVarS[.8]{SMSM}{\StackFourLabels{\IOL{1}{1}}{\IOL{\s}{\s}}{\IOL{\m\s\m}{\m\s\m}}{\IOL{\s\m\s\m}{\s\m\s\m}}}
\end{VCPicture}}
\caption{An $\mz\dz$-reduced non-trivial IR-automaton whose group is finite.}\label{fig-dihedral-automaton}
\end{figure}


\section{Helix graphs and finiteness}
\label{s:helix}
In this section, we concentrate on IR-automata and show the pertinence
of helix graphs for the finiteness problem.

\subsection{A necessary condition for finiteness}
To prove the results in this section, 
it is convenient to use a graphical representation in which $A$ and
$\Sigma$ play symmetrical roles. 
Consider $(x,i)\in A \times \Sigma$ with $\delta_i(x)=y$ and
$\rho_x(i)=j$. 
The corresponding transition $x\stackrel{i|j}{\longrightarrow} y$  is
represented by the \emph{cross-transition}:
\[
\croix{x}{y}{i}{j}\:.
\]
The automaton $\aut{A}$ is identified with the set of its
cross-transitions (of cardinality $|A| \times |\Sigma|$).

A path in \(\aut{A}\) (\resp in \(\dz(\aut{A})\)) is represented by
an horizontal (\resp vertical) \emph{cross-diagram} obtained by
concatenating the crosses. We may also consider rectangular
cross-diagrams of dimension $m\times n$, on which one
can read the production functions of $\aut{A}_{m,n}$ and
$\dz(\aut{A}_{m,n})$.
For instance the
cross-diagram:
\begin{minipage}{.4\linewidth}
\[\begin{array}{ccccc}
    & i_1      &     & i_{n} \\
x_1 & \lacroix & \dots & \lacroix & y_1\\
 & \vdots & & \vdots & \\
x_{m} & \lacroix & \dots &  \lacroix & y_{m}\\
    & j_1      &     & j_{n} 
\end{array}\]
\end{minipage}\qquad
\begin{minipage}{.5\linewidth}
corresponds in $\aut{A}_{m,n}$ to 
\[
\rho_{x_1\cdots x_m} (i_1\cdots i_n) = j_1\cdots j_n,\]
\[\delta_{i_1\cdots i_n}(x_1\cdots x_m) = y_1\cdots y_m \:.
\]
\end{minipage}

\noindent Replacing every cross by a square, we get the 
``square-diagrams'' of \cite{square}.

\begin{proposition}\label{prop:helices}
Let \(\aut{A}\) be an IR-automaton. If the helix graph of order
$(1,1)$ of \(\aut{A}\) is a union of cycles, so are all the helix
graphs (of any order) of \(\aut{A}\).
\end{proposition}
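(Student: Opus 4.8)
The plan is to reduce the statement to a purely combinatorial reconstruction property of rectangular cross-diagrams. First I would record the elementary observation that a helix graph, being a functional graph (each node $(x,i)$ carries the single outgoing arc to $(\delta_i(x),\rho_x(i))$), is a union of cycles precisely when the underlying map $F\colon (x,i)\mapsto(\delta_i(x),\rho_x(i))$ is a \emph{bijection} of the finite set $A\times\Sigma$ --- equivalently, by Lemma~\ref{le-bi}, when $\aut{A}$ is bireversible. The same remark applies verbatim to $\aut{A}_{m,n}$: its helix graph $\cH_{m,n}$ is a union of cycles iff the map $F_{m,n}\colon(\mathbf{x},\mathbf{i})\mapsto(\delta_{\mathbf{i}}(\mathbf{x}),\rho_{\mathbf{x}}(\mathbf{i}))$ on $A^m\times\Sigma^n$ is a bijection. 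Since source and target coincide and are finite, it suffices to prove that $F_{m,n}$ is injective.

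Next I would encode $F_{m,n}$ through the $m\times n$ cross-diagram: the input $(\mathbf{x},\mathbf{i})$ is the pair (left column, top row) and the output $(\mathbf{y},\mathbf{j})=F_{m,n}(\mathbf{x},\mathbf{i})$ is the pair (right column, bottom row). The hypothesis that $\cH_{1,1}$ is a union of cycles says exactly that each individual cell, read as a cross-transition of $\aut{A}$, has its (left, top) corner uniquely determined by its (right, bottom) corner; I would call this single-cell inversion $F_{1,1}^{-1}$.

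The core of the argument is then a reconstruction sweep showing that the whole grid --- in particular its left column and top row --- is determined by the right column and bottom row. I would proceed cell by cell starting at the bottom-right corner: its right and bottom edges are the known outputs $y_m$ and $j_n$, so $F_{1,1}^{-1}$ recovers its left and top edges. Sweeping the bottom row from right to left, each cell's right edge is the previously recovered left edge of its right neighbour while its bottom edge is a known output letter, so $F_{1,1}^{-1}$ applies again; this recovers all the vertical edges separating rows $m-1$ and $m$. Those edges now serve as known bottom edges for row $m-1$, and an induction on rows from bottom to top recovers every internal edge, hence the left column $\mathbf{x}$ and the top row $\mathbf{i}$ uniquely. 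Thus $(\mathbf{y},\mathbf{j})$ determines $(\mathbf{x},\mathbf{i})$, so $F_{m,n}$ is injective, hence bijective, and $\cH_{m,n}$ is a union of cycles.

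The only delicate point is the bookkeeping of the sweep: one must check at each step that the two edges fed into $F_{1,1}^{-1}$ (a right edge and a bottom edge of the current cell) are already known, which is guaranteed by the chosen bottom-right-to-top-left ordering. There is no deeper obstacle, since invertibility and reversibility of $\aut{A}$ are automatically inherited by $\aut{A}_{m,n}$, so that $\cH_{m,n}$ is indeed a functional graph on $A^m\times\Sigma^n$ and Lemma~\ref{le-bi} remains available if one prefers the bireversibility formulation. If desired, the same conclusion follows more abstractly by writing $\aut{A}_{m,n}=(\aut{A}_{m,1})_{1,n}$ and treating the two one-dimensional sweeps in turn, but the single two-dimensional reconstruction already settles all orders at once.
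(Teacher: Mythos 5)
Your argument is correct and is essentially the paper's own proof: the paper likewise reduces the claim to the single-cell predecessor property (every node of $\cH_{1,1}$ has a predecessor, equivalently your $F_{1,1}^{-1}$) and then reconstructs the $m\times n$ cross-diagram by the same bottom-right-to-top-left expansion, recovering the west column and north row as the predecessor of $(\mathbf{x},\mathbf{u})$. The only cosmetic difference is that you phrase the finite-set dichotomy as injectivity of $F_{m,n}$ where the paper phrases it as surjectivity (existence of predecessors), which are equivalent here.
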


\begin{proof}
Observe that a helix graph is a union of cycles if and only if
any node has a predecessor. By assumption, $\cal
H$ is a union of cycles, therefore, any  \((x,
u)\in\alphA\times\alphS\) has a predecessor. Now consider \((\mot{x}, \mot{u})\in\alphA^m\times
\alphS^n\) with \(\mot{x}=x_1\cdots x_m\) and \(\mot{u}=u_1\cdots
u_n\). Let $(\tilde{x}_m,\tilde{u}_n)$ be the predecessor of~$(x_m,u_n)$ in~$\cal H$.
Start with the cross of $(\tilde{x}_m,\tilde{u}_n)$
and $(x_m,u_n)$ (left of~(\ref{eq-cross})), and expand it step-by-step using
the existence of predecessors in~$\cal
H$ (right of~(\ref{eq-cross}) for the first few steps). 
\begin{equation}\label{eq-cross}
\croix{\tilde{x}_{m}}{x_{m}}{\tilde{u}_{n}}{u_{n}}, \qquad \qquad \qquad
\begin{array}{ccccc}
    &  *     &     & * \\
* & \lacroix & * & \lacroix & x_{m-1}\\
 & * & & \tilde{u}_{n} & \\
* & \lacroix & \tilde{x}_{m} &  \lacroix & x_{m}\\
    & u_{n-1}     &     & u_{n} 
\end{array}
\end{equation}
In the end we get a cross-diagram of dimension $m\times n$.
The words on the west and north of the cross-diagram
form a predecessor for~\((\mot{x}, \mot{u})\).
\end{proof}

\begin{theorem}\label{thm:fini_cycles}
Let \(\aut{A}\) be an IR-automaton. If \(\G\) is finite, then the helix
graphs of \(\aut{A}\) are unions of cycles.
\end{theorem}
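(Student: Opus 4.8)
The strategy is to prove the contrapositive, reduced to a single order. By Proposition~\ref{prop:helices} it suffices to treat the order-$(1,1)$ helix graph $\cH$: once $\cH$ is a union of cycles, so are all the helix graphs of $\aut{A}$. Since $\cH$ is finite and every node $(x,i)$ has exactly one outgoing arc, towards $(\delta_i(x),\rho_x(i))$, it is a union of cycles if and only if the map $\varphi\colon (x,i)\mapsto(\delta_i(x),\rho_x(i))$ on $A\times\Sigma$ is onto, equivalently a bijection. So I would prove: if $\varphi$ is not onto, then $\G$ is infinite. Here I use that, $\aut{A}$ being an IR-automaton, $\G$ is finite if and only if $\presm{\dz(\aut{A})}$ is (Lemma~\ref{lm-g-sg-finis} and Proposition~\ref{pr:duale-finitude}), and that $\G$ is finite precisely when $\aut{A}$ has bounded growth, \emph{i.e.}\ when only finitely many production functions $\rho_{\mot{x}}$, $\mot{x}\in A^*$, occur.

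First I would record that reversibility identifies the arcs of $\cH$ ending at a node with first coordinate $y$ as those leaving $(\delta_i^{-1}(y),i)$, $i\in\Sigma$, carrying the label $g_y(i):=\rho_{\delta_i^{-1}(y)}(i)$; thus $\varphi$ is onto if and only if each $g_y\colon\Sigma\to\Sigma$ is onto. Assuming some $g_{y}$ misses a value, I would show that the defect \emph{persists at every order} by a corner argument on cross-diagrams: a predecessor of $(\mot{y},\mot{j})$ in the helix graph $\cH_{m,n}$ of order $(m,n)$ amounts to an $m\times n$ cross-diagram with right column $\mot{y}$ and bottom row $\mot{j}$, and filling its south-east corner already requires a $\cH$-predecessor of the single node $(y_m,j_n)$. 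Hence whenever $(y_m,j_n)$ is predecessor-free in $\cH$, the node $(\mot{y},\mot{j})$ is predecessor-free in $\cH_{m,n}$, producing at least $(\#A)^{m-1}(\#\Sigma)^{n-1}$ nodes outside the image of the order-$(m,n)$ helix map. Equivalently, for all $m,n$ the map $(\mot{x},\mot{u})\mapsto(\delta_{\mot{u}}(\mot{x}),\rho_{\mot{x}}(\mot{u}))$ is far from injective, so there are many collisions $(\mot{x},\mot{u})\neq(\mot{x}',\mot{u}')$ with $\delta_{\mot{u}}(\mot{x})=\delta_{\mot{u}'}(\mot{x}')$ and $\rho_{\mot{x}}(\mot{u})=\rho_{\mot{x}'}(\mot{u}')$.

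Finally I would turn this unbounded, exponential image-defect into unbounded growth of $\{\rho_{\mot{x}}\}$, contradicting finiteness of $\G$, and this conversion is where I expect the real difficulty to lie. The delicate point is that for $m,n>1$ the colliding data involve the \emph{products} $\rho_{\mot{x}}$ and $\delta_{\mot{u}}$, so one must argue that the persistent defect cannot be absorbed by only finitely many distinct such products. A natural route is to run the bound of Proposition~\ref{pr:duale-finitude} in reverse: a bounded number of production functions would let one reconstruct a predecessor for every node (in the spirit of the cross-diagram expansion used in Proposition~\ref{prop:helices}), so that a permanent defect forces $\presm{\dz(\aut{A})}$, and hence $\G$, to be infinite. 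A further subtlety to respect is that $\G$ may well be infinite torsion, so the argument must exhibit an action that fails to stabilise at any finite level rather than a single element of infinite order.
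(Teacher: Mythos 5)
There is a genuine gap, and it sits exactly where you place it yourself: the conversion of the persistent predecessor-defect into infiniteness of \(\G\) is never carried out. Your first two paragraphs are correct --- the reduction to order \((1,1)\) via Proposition~\ref{prop:helices}, the observation that the finite out-degree-one graph \(\cH\) is a union of cycles iff \(\varphi\colon(x,i)\mapsto(\delta_i(x),\rho_x(i))\) is onto, and the south-east-corner argument showing that a predecessor-free node of \(\cH\) forces predecessor-free nodes in every \(\cH_{m,n}\). But the last paragraph only names a ``natural route'' (running Proposition~\ref{pr:duale-finitude} in reverse, arguing the defect ``cannot be absorbed by finitely many products'') without giving an argument; the defect ratio in \(\cH_{m,n}\) stays bounded (the image misses a constant fraction \(1/(\#A\,\#\Sigma)\) of the nodes), and nothing you write extracts unboundedly many distinct production functions from that. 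Your closing worry about infinite torsion groups also signals that you are hunting for the wrong kind of witness: no element of infinite order is needed anywhere.

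The contrapositive makes the problem harder than it is; the paper argues directly and the whole point is torsion inside a \emph{finite} group. If \(\G\) is finite then so is \(\pres{\dz(\aut{A})}\) by Proposition~\ref{pr:duale-finitude} together with Lemma~\ref{lm-g-sg-finis}, hence for fixed \(x\in A\) and \(i\in\Sigma\) there are \(m,n>0\) with \(\rho_{x^m}=\rho_x^m=\mathrm{id}\) and \(\delta_{i^n}=\delta_i^n=\mathrm{id}\). This says precisely that \(x^m\stackrel{i^n\mid i^n}{\longrightarrow}x^m\) is a transition of \(\aut{A}_{m,n}\), i.e.\ the \(m\times n\) cross-diagram with west column \(x^m\) and north row \(i^n\) has east column \(x^m\) and south row \(i^n\) again; its south-east cross is then a single transition of \(\aut{A}\) whose outputs are \(x\) and \(i\), exhibiting a predecessor of \((x,i)\) in \(\cH\). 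Every node has a predecessor, so \(\cH\) is a union of cycles, and Proposition~\ref{prop:helices} finishes. If you want to salvage your contrapositive, the honest completion is exactly this statement turned around: a predecessor-free \((x,i)\) means no such pair \((m,n)\) exists, so either \(\rho_x\) or \(\delta_i\) has infinite order in the respective group, whence \(\G\) is infinite --- but that is the paper's proof, not a new one.
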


\begin{proof}
By Proposition~\ref{prop:helices}, it is sufficient to prove the result for
order \((1,1)\). 
Consider  \(x\in \alphA\) and
\(i\in\alphS\).
According to
Proposition~\ref{pr:duale-finitude}, \(\grEng{\dual{{\cal A}}}\) is
finite. Therefore, there exist $m,n>0$ such that \(\rho_x^m=\rho_{x^m}=\id[\grEng{{\cal A}}]\) and
\(\delta_i^n=\delta_{i^n}=\id[\grEng{\dual{{\cal
A}}}]\). It implies that $x^m\stackrel{i^n|i^n}{\longrightarrow} x^m$ is a transition 
in the Mealy automaton of order $(m,n)$. The corresponding
cross-diagram is represented below:
\[\begin{array}{ccccc}
    & i      &     & i \\
x & \lacroix & \dots & \lacroix & x\\
 & \vdots & & \vdots & \\
x & \lacroix & \dots &  \lacroix & x\\
    & i      &     & i 
\end{array}\:.\]
The south-east cross of the diagram provides a predecessor for~$(x,i)$.
\end{proof}

\noindent There exist IR-automata generating infinite groups whose
helix graphs are union of cycles. The smallest examples are Ale\v{s}in
automata (see Table~\ref{tbl-examples}).

Next result follows directly from Lemma~\ref{le-bi} and
Theorem~\ref{thm:fini_cycles}.

\begin{corollary}\label{cor:jir}
Consider an IR-automaton which is not bireversible. Then the group
generated by the automaton is infinite. 
\end{corollary}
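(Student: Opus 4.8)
The plan is to derive Corollary~\ref{cor:jir} as a direct logical consequence of the two results cited just before it, namely Lemma~\ref{le-bi} and Theorem~\ref{thm:fini_cycles}, rather than to argue anything from scratch. The statement to prove is a contrapositive-friendly implication: for an IR-automaton $\aut{A}$ that is \emph{not} bireversible, the group $\pres{\aut{A}}$ is infinite.

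First I would set up the contrapositive. Assume, toward a contradiction, that $\aut{A}$ is an IR-automaton which is not bireversible but whose group $\G=\pres{\aut{A}}$ is nonetheless finite. The goal is to expose the incompatibility of these two hypotheses.

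Next I would invoke Theorem~\ref{thm:fini_cycles}: since $\aut{A}$ is an IR-automaton and $\G$ is finite, every helix graph of $\aut{A}$ is a union of cycles; in particular the helix graph $\cH$ of order $(1,1)$ is a union of cycles. Then I would apply the forward direction of Lemma~\ref{le-bi}, which states for an IR-automaton that $\cH$ being a union of cycles is equivalent to $\aut{A}$ being bireversible. This yields that $\aut{A}$ is bireversible, contradicting our standing assumption. Hence no such $\aut{A}$ exists, and every non-bireversible IR-automaton generates an infinite group.

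The proof is essentially a two-line chaining of cited equivalences, so there is no genuine technical obstacle to overcome here; the entire content has already been established upstream. The only point requiring a small amount of care is making sure the logical direction is used correctly: Theorem~\ref{thm:fini_cycles} is stated as ``finite $\implies$ union of cycles'' and Lemma~\ref{le-bi} as the biconditional ``bireversible $\iff$ union of cycles'', so combining them gives ``finite $\implies$ bireversible,'' whose contrapositive ``not bireversible $\implies$ infinite'' is exactly the corollary. No macros beyond those already defined (such as \G, \cH, \pres, and \aut) are needed.
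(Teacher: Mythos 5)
Your proof is correct and is exactly the argument the paper intends: the paper states that the corollary ``follows directly from Lemma~\ref{le-bi} and Theorem~\ref{thm:fini_cycles},'' and your chaining of ``finite $\implies$ helix graph is a union of cycles $\implies$ bireversible'' in contrapositive form is precisely that derivation.
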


\subsection{A necessary and sufficient condition for finiteness}

The condition in next theorem is not effective. Hence, it does not directly lead to a decision procedure of
finiteness.

Recall the construction and notation defined at the end of section
\ref{sse-automgroup}: for an IR-automaton $\aut{A}$ with stateset $A$ and alphabet
$\Sigma$, we denote by $\widetilde{A}$ the extension with stateset $A\sqcup A^{-1}$ and alphabet
$\Sigma\sqcup \Sigma^{-1}$. 

\begin{theorem}\label{th:cycles_bornes}
Consider an IR-automaton $\aut{A}$. The group $\grEng{\aut{A}}$ is
finite if and only if there exists $K$ such that, for all $k,l$, the
helix graphs \(\cH(k,l)\) of  \(\widetilde{\aut{A}}\) are unions of cycles
of lengths bounded by \(K\).
\end{theorem}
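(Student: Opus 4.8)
The statement is an equivalence, so I would prove the two implications separately, and I expect the two directions to have very different flavors. The forward direction (finite $\Rightarrow$ bounded cycles) should follow quickly from results already at hand, while the reverse direction (bounded cycles $\Rightarrow$ finite) is where the real work lies. Throughout I would work with $\widetilde{\aut{A}}$ rather than $\aut{A}$; by Corollary~\ref{cor-gen} these two automata generate groups that are simultaneously finite or infinite, so replacing $\aut{A}$ by $\widetilde{\aut{A}}$ costs nothing and has the advantage that its stateset $A\sqcup A^{-1}$ and alphabet $\Sigma\sqcup\Sigma^{-1}$ are closed under inversion, which makes the bookkeeping symmetric.

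\medskip
\noindent\textbf{Forward direction.}
Suppose $\grEng{\aut{A}}$ is finite. Then $\grEng{\widetilde{\aut{A}}}$ is finite by Corollary~\ref{cor-gen}, and by Proposition~\ref{pr:duale-finitude} its dual group is finite as well. By Theorem~\ref{thm:fini_cycles} every helix graph $\cH(k,l)$ of $\widetilde{\aut{A}}$ is then a union of cycles, so I only need a \emph{uniform} bound $K$ on the cycle lengths. The length of the cycle through a node $(\mot{x},\mot{u})\in (A\sqcup A^{-1})^k\times(\Sigma\sqcup\Sigma^{-1})^l$ is the order in $\grEng{\widetilde{\aut{A}}}$ of the element $\rho_{\mot{x}}$ restricted to the relevant orbit — more precisely, following the helix arc $(\mot{x},\mot{u})\mapsto(\delta_{\mot{u}}(\mot{x}),\rho_{\mot{x}}(\mot{u}))$ repeatedly returns to the start after a number of steps dividing $\mathrm{lcm}(\mathrm{ord}(\rho_{\mot{x}}),\mathrm{ord}(\delta_{\mot{u}}))$. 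Since both $\grEng{\widetilde{\aut{A}}}$ and its (finite) dual have finitely many elements, the set of possible orders is finite, and I can take $K$ to be the exponent of $\grEng{\widetilde{\aut{A}}}$ (or the lcm of the two exponents). This gives a single $K$ valid for all $k,l$.

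\medskip
\noindent\textbf{Reverse direction.}
This is the substantive part. Assume all helix graphs $\cH(k,l)$ of $\widetilde{\aut{A}}$ are unions of cycles of length at most $K$, and I must deduce finiteness of $\grEng{\aut{A}} = \grEng{\widetilde{\aut{A}}}$. The cycle through $(\mot{x},\mot{u})$ having length dividing $K$ means that applying the helix map $K!$ times (or $\mathrm{lcm}(1,\dots,K)$ times) returns to $(\mot{x},\mot{u})$; reading off the first coordinate, this says $\delta_{\mot{u}}^{\,N}(\mot{x})=\mot{x}$ for $N=\mathrm{lcm}(1,\dots,K)$, and symmetrically $\rho_{\mot{x}}^{\,N}(\mot{u})=\mot{u}$, \emph{uniformly} in $\mot{x},\mot{u}$ and in the lengths $k,l$. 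The plan is to convert this uniform bound into a bound on the order of group elements and then invoke a standard finiteness argument. Concretely, for a word $\mot{x}\in(A\sqcup A^{-1})^*$ the production map $\rho_{\mot{x}}$ acts on $(\Sigma\sqcup\Sigma^{-1})^l$ for every $l$; the cycle condition forces $\rho_{\mot{x}}^{\,N}$ to fix every length-$l$ word, i.e. $\rho_{\mot{x}}^{\,N}=\id$ on all of $(\Sigma\sqcup\Sigma^{-1})^*$. Hence every element of $\grEng{\widetilde{\aut{A}}}$ has order dividing $N$, so the group has bounded exponent. A torsion group of bounded exponent generated by finitely many elements need not be finite in general (Burnside!), so the bounded exponent alone is not enough — and this is exactly where I expect the main obstacle to sit.

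\medskip
\noindent\textbf{The main obstacle, and how I would get around it.}
To upgrade bounded exponent to finiteness I would exploit the automaton structure rather than appeal to general group theory. The key is that $\grEng{\widetilde{\aut{A}}}$ acts on the rooted tree $(\Sigma\sqcup\Sigma^{-1})^*$ and is self-similar, so finiteness is equivalent to the action being through a finite quotient, i.e. to the orbit-growth / the number of distinct states $\rho_{\mot{x}}$ stabilizing. The uniform cycle bound $K$ should be leveraged to bound the \emph{depth} at which two production maps $\rho_{\mot{x}}$ and $\rho_{\mot{y}}$ become distinguishable, thereby bounding the number of distinct elements via a pigeonhole/Nerode-type argument analogous to Lemma~\ref{lem-min} and to the transducer-counting estimate in the proof of Proposition~\ref{pr:duale-finitude}. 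The delicate point is that the bound $K$ controls cycle \emph{lengths} in the helix graphs but not \emph{a priori} the number of cycles at level $(k,l)$, which is what governs group size; I would need to argue that the uniform length bound forces the dual dynamics to stabilize as well — using that $\widetilde{\aut{A}}$ is closed under dualization-with-inversion and Proposition~\ref{pr:duale-finitude} symmetrically — so that both $\grEng{\widetilde{\aut{A}}}$ and its dual have bounded exponent simultaneously, and then combining the two bounded exponents through the finiteness estimate of Proposition~\ref{pr:duale-finitude} to close the loop. I anticipate this interplay between the primal and dual exponent bounds to be the crux, and the non-effectivity noted before the theorem statement to reflect precisely that the passage from ``bounded cycles at every level'' to an explicit finiteness witness cannot be made uniform without already knowing $K$.
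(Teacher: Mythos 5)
Your overall architecture (reduce to $\widetilde{\aut{A}}$ via Corollary~\ref{cor-gen}, get ``union of cycles'' from Theorem~\ref{thm:fini_cycles}, then fight for the uniform bound and for the converse) matches the paper's, but there are two genuine gaps, one of them fatal.

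The smaller gap: in both directions you treat the helix dynamics as if it iterated the fixed pair of maps $(\delta_{\mot{u}},\rho_{\mot{x}})$ --- you bound the cycle length by $\lcm$ of the two orders in the forward direction, and in the reverse direction you read off $\rho_{\mot{x}}^{\,N}(\mot{u})=\mot{u}$ from the cycle closing up. Neither is correct as stated: after one step the node is $(\delta_{\mot{u}}(\mot{x}),\rho_{\mot{x}}(\mot{u}))$, and the next step applies $\rho_{\delta_{\mot{u}}(\mot{x})}$, not $\rho_{\mot{x}}$ again. The forward bound is easily repaired: every node of the cycle through $(\mot{x},\mot{u})$ has the form $(h(\mot{x}),g(\mot{u}))$ with $h,g$ in the two (finite) groups, so distinctness of the nodes bounds the length by $\#\pres{\widetilde{\aut{A}}}\cdot\#\pres{\dz(\widetilde{\aut{A}})}$ --- this is what the paper does. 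The reverse reading is only valid for \emph{unitary} alphabet words $\mot{w}$ (those with $\delta_{\mot{w}}=\id$), for which the first coordinate stays at $\mot{x}$ and the cycle through $(\mot{x},\mot{w})$ has length the period of $\mot{w}$ under $\rho_{\mot{x}}$; to realize the full order of $\rho_{\mot{x}}$ as a cycle length one must concatenate unitary extensions of witnesses $\mot{u}_k$ with $\rho_{\mot{x}}^k(\mot{u}_k)\neq\mot{u}_k$. That construction is the core of the paper's converse and is not optional.

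The fatal gap: even granting bounded exponent, you correctly identify the Burnside obstruction but your proposed escape --- a Nerode/pigeonhole argument bounding the depth at which production maps separate --- is a hope, not an argument: nothing in the hypothesis bounds the number of cycles per level or the separation depth, and if such an elementary stabilization argument existed the criterion would be effective, which the paper explicitly states it is not. The missing ingredient is Zelmanov's solution of the restricted Burnside problem: automaton groups act faithfully on a locally finite rooted tree, hence are residually finite, and by Zelmanov a residually finite group of bounded torsion is finite. The paper deploys this in the contrapositive (infinite $\Rightarrow$ element orders unbounded $\Rightarrow$ cycle lengths unbounded via the unitary-word construction), but in either direction the theorem cannot be closed without this deep external input, which your proposal never identifies.
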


\begin{proof}
Assume first that $\G$ is finite: so is \(\pres{\widetilde{\aut{A}}}\) by
Corollary~\ref{cor-gen}. Theorem~\ref{thm:fini_cycles} shows that  
helix graphs of any order are unions of cycles. It remains to prove
that the lengths of these cycles are uniformly bounded. 
By Proposition~\ref{pr:duale-finitude}, the group
$\grEng{\dual{\widetilde{\aut{A}}}}$ is finite as well.  Let \({\cal C}\)
be a cycle in a helix graph of \(\widetilde{\aut{A}}\) and 
let \((\mot{u},\mot{v})\in(\alphA\sqcup\inverse{\alphA})^*
\times(\alphS\sqcup\inverse{\alphS})^*\) be a node of this
cycle. Each node  of  \({\cal C}\) is of the 
form \((h(\mot{u}), g(\mot{v}))\), where \(g\) (\resp \(h\)) is an element of
\(\pres{\widetilde{\aut{A}}}\) (\resp
\(\pres{\dz(\widetilde{\aut{A}})}\)). Since the nodes are pairwise
distinct, the length of the  cycle \({\cal C}\) is at most 
\(\#\pres{\widetilde{\aut{A}}} \times \#\pres{\dz(\widetilde{\aut{A}})}\). 

\medskip

Let us prove the converse and assume that the group \(\G\)
is infinite: so is \(\pres{\widetilde{\aut{A}}}\) by
Corollary~\ref{cor-gen}. 
First we argue  that the orders of the elements of
$\pres{\widetilde{\aut{A}}}$ are unbounded.
Indeed, automata groups are residually finite by construction since they act
faithfully on rooted locally finite trees. Moreover it follows
from Zelmanov's solution of the restricted Burnside problem
\cite{Ze1,Ze2,Vl} that any residually finite group with
bounded torsion is finite. Since  $\pres{\widetilde{\aut{A}}}$ is
infinite, the orders of its 
elements are unbounded. 

There exists either $\mot{x}\in (A\sqcup \inverse{A})^*$ such that the
order of $\rho_{\mot{x}}$ is infinite, or a sequence
\((\mot{x}_n)_{n\in\N}\subseteq(\alphA\sqcup\inverse{\alphA})^*\) such
that the sequence $(k_n)_n$ of orders of the
elements $\rho_{\mot{x}_n}$ converges to infinity. We carry out the
proof in the second case, the first one can be treated similarly. 
Let us concentrate on $\rho_{\mot{x}_n}$, element of order $k_n$ of
$\pres{\widetilde{\aut{A}}}$. For all $1\leq k<k_n$, there exists a word
$\mot{u}_k \in (\Sigma\sqcup\inverse{\Sigma})^*$ such that
$\rho_{\mot{x}_n}^k(\mot{u}_k)=\widetilde{\mot{u}}_k \neq \mot{u}_k$. 

Say that a word $\mot{v}\in (\Sigma\sqcup\inverse{\Sigma})^*$ is {\em
  unitary} if $\delta_{\mot{v}}$ is the identity of 
$\grEng{\dual{\widetilde{\aut{A}}}}$. Since
$\grEng{\dual{\widetilde{\aut{A}}}}$ is a group,
the word $\mot{u}_k$ can be extended into a unitary word
$\mot{u}_k\mot{v}_k$. Set $\mot{w}_n= \mot{u}_1\mot{v}_1\cdots
\mot{u}_{k_n-1}\mot{v}_{k_n-1}$. By construction, we have:
$\rho_{\mot{x}_n}(\mot{w}_n)= \widetilde{\mot{u}}_1\cdots \neq
\mot{w}_n$. Since $\mot{u}_1\mot{v}_1$ is unitary, we also have:
\begin{eqnarray*}
\rho_{\mot{x}_n}^2(\mot{w}_n) & = & \rho_{\mot{x}_n}^2 (\mot{u}_1\mot{v}_1)
\rho_{\mot{x}_n}^2 (\mot{u}_2\mot{v}_2\cdots
\mot{u}_{k_n-1}\mot{v}_{k_n-1}) \\ 
& = & \rho_{\mot{x}_n}^2 (\mot{u}_1\mot{v}_1) \widetilde{\mot{u}}_2 \cdots
\ \neq \ \mot{w}_n\:.
\end{eqnarray*} 
In the same way, we prove that for all $k<k_n$, we have
$\rho_{\mot{x}_n}^k(\mot{w}_n) \neq \mot{w}_n$. 

In the helix graph of \(\widetilde{\aut{A}}\) of order
$(|\mot{x}_n|,|\mot{w}_n|)$,
consider the cycle containing the node \((\mot{x}_n,
\mot{w}_n)\). Since \( \mot{w}_n \) is unitary, the successors of \((\mot{x}_n,
\mot{w}_n)\) on the cycle are: \((\mot{x}_n,
\rho_{\mot{x}_n}(\mot{w}_n))\), \((\mot{x}_n,
\rho_{\mot{x}_n}^2(\mot{w}_n))\), \dots Therefore the cycle is of length~$k_n$. Since $k_n$
converges to infinity, the lengths of the cycles of the helix graphs of~\(\widetilde{\aut{A}}\) are not
uniformly bounded.
\end{proof}

\section{Experimentations}\label{sec-experimentations}

Here, we show how gathering the new criteria with previously known
ones allows to decide the (semi)group finiteness for substantially
more Mealy automata (at least for those with small alphabet and
stateset --- of size up to~3). 

\begin{table}[ht]\label{gag2x2}
\centering
\caption{Results of experimentations on 2-letter 2-state Mealy automata.}
{\begin{tabular}{|c|lE>{\centering }m{\col}|>{\centering }m{\col}|>{\centering }m{\col}|>{\centering }
m{\col}|>{\centering }m{\col}|>{\centering }m{\col}|>{\centering }m{\col}E>{\centering }m{\col}|}
\cline{3-6}
\multicolumn{2}{c|}{}&\multicolumn{4}{c|}{invertible}\tabularnewline
\hline
\multicolumn{2}{|cE}{$2$-letter $2$-state}
					&$\Cijir$	&$\Cji$	&$\Cjir$	&$\Cbir$	&$\Cdijir$	&$\Cdji$	&$\Cnot$	&$\Call$\tabularnewline
\multicolumn{2}{|cE}{Mealy automata}
					&1		&14		&1		&8		&1		&14		&37		&76\tabularnewline
\hline
\multicolumn{4}{c|}{}&\multicolumn{4}{c|}{reversible}\tabularnewline
\cline{5-8}
\multicolumn{10}{c}{\vspace*{-8pt}}\tabularnewline
\hline
\multirow{7}{*}{\rotatebox{90}{\bf\scriptsize \!\!previous criteria}}
&Finitary				&--		&5		&--		&3		&--		&--		&1		&9\tabularnewline
&Thompson-Wielandt\,	&--		&--		&--		&5		&--		&--		&--		&5\tabularnewline
&Level-transitive		&1		&4		&1		&--		&--		&--		&--		&6\tabularnewline
\cline{2-10}
&Sidki				&--	 	&1		&--		&--		&--		&--		&--		&1\tabularnewline
\cline{2-10}
&Limitary cycles		&--	 	&4		&--		&6		&--		&8		&6		&6\tabularnewline
\cline{2-10}
&Cayley${}^\pm$		&1		&1		&1		&--		&--		&1		&2		&6\tabularnewline
&Dual Cayley${}^\pm$	&1	 	&--		&1		&1		&--		&--		&3		&6\tabularnewline
\cline{2-10}
&\quad union			&1	 	&11		&1		&8		&--		&8		&8		&37\tabularnewline
\hline
\hline
\multirow{5}{*}{\rotatebox{90}{\bf\scriptsize \!\!new criteria}}
&$\mz\dz$-trivial		&--		&10		&--		&8		&--		&10		&11		&39\tabularnewline
&Cycles				&1		&--		&1		&--		&--		&--		&--		&2\tabularnewline
&+Sum				&--		&--		&--		&3		&--		&4		&--		&7\tabularnewline
&+Dual				&--		&8		&1		&8		&1		&11		&8		&37\tabularnewline
\cline{2-10}
&\quad union			&1		&10		&1		&8		&1		&14		&13		&48\tabularnewline
\hline
\multicolumn{10}{c}{\vspace*{-8pt}}\tabularnewline
\cline{2-10}
\multicolumn{1}{c|}{}
&\quad total union		&1		&14		&1		&8		&1		&14		&13	
&~\,52\footnotemark[1]\tabularnewline
\cline{2-10}
\end{tabular}}
\end{table}
\vspace*{10pt}

\footnotetext[1]{The table shows that 52 out of 76
  (isomorphism classes of)  2-letter
  2-state Mealy automata can be treated directly using either the old
  or the new criteria. But actually, the finiteness problem is solved
  for the 76 cases. Indeed, 
  a series of papers dealing specifically with 2-letter
  2-state Mealy automata (see~\cite{brs} and references therein) has
  contributed to the actual state of knowledge: 48
  automata  generate finite semigroups, 10 generate semigroups of linear growth,
  17 generate semigroups of exponential
  growth and 1  generates the
  semigroup~$\mathbf{S_{I_2}}$ of intermediate growth (see
  Table~\ref{tbl-examples}).} 

\subsection{Partition}

\noindent For convenience of exposition, we introduce the
decomposition of the whole class of Mealy automata~$\Call$ (up to
isomorphism) into a disjoint union of seven subclasses. By
denoting~$\Ci$ the class of invertible Mealy automata and~$\Cir$ the
class of invertible-reversible Mealy automata, the seven classes are
defined as follows:

\begin{enumerate}
\item[] $\Cbir$ is the class of bireversible Mealy automata,
\item[] $\Cjir$ (standing for $\mathbf J$ust $\Cir$) is the complementary in~$\Cir$ of~$\Cbir$,
\item[] $\Cijir$ consists of the inverses of automata from~$\Cjir$,
\item[] $\Cdijir$ consists of the duals of automata from~$\Cijir$,
\item[] $\Cji$ (standing for $\mathbf J$ust $\Ci$) is the complementary in~$\Ci$ of the union~$\Cir\cup\Cijir$,
\item[] $\Cdji$ consists of the duals of automata from~$\Cji$,
\item[] $\Cnot$ is the complementary (in~$\Call$) of the (disjoint) union of the previous six.
\end{enumerate}

\subsection{Previous criteria}
\subsubsection*{Previously implemented criteria}
The $\GAP$ packages~$\FR$ and~$\SK$
(see~\cite{FR,GAP4,sav}) both overload the functions~\texttt{Order}
and~\texttt{IsFinite} by using several criteria mainly coming from
geometric group theory. More precisely,
we have tested all the corresponding functions:
\texttt{IsFinitaryFRMachine},  \texttt{IsLevelTransitive} and
\texttt{ISFINITE\us THOMPSONWIELANDT\us FR} from $\FR$
and~\texttt{IsFractal} and
\texttt{IsSphericallyTransitive}  from $\SK$. 
While the first two work perfectly, the last three may not stop. From a practical point of view, \texttt{IsSphericallyTransitive} allows to discriminate too few automata.
Now \texttt{IsLevelTransitive} happens to be much slower than
\texttt{IsFractal}, so the latter can be advantageously viewed as a
preliminary criterion of the former. The first half of the~{\bf\small
  previous criteria} part of the following tables expands the
performance of these three criteria coming from geometric group
theory. For~$2$-letter $3$-state and $3$-letter $2$-state (\resp
$3$-letter $3$-state) automata, the execution time
of~\texttt{IsFractal} and~\texttt{IsLevelTransitive} was limited to
100\,000~ms (\resp 200\,000~ms). The resulting data have to be
considered with this arbitrary limitation in mind, together with the
observation that both functions happen to be significantly sensitive
to the representative inside an isomorphism~class. 

\subsubsection*{Sidki's criterion}
Based on Sidki's fundamental work, the solution to the order
problem~\cite{sidkiconjugacy, sidki} for the class of so-called
bounded automorphisms  --- that is, with growth degree at most~0 ---
may provide an infiniteness criterion: in any invertible
automaton~$(A, \Sigma, \delta,\rho)$, a bounded state~$x\in A$ has
infinite order whenever there exists a label~$i|j$ with~$j\not
=i\in\Sigma$ on an edge between~$x$ and some state belonging to the
same strongly connected component. This criterion appears as the second
field of the~{\bf\small previous criteria} part of the tables.  

\subsubsection*{Antonenko's criterion}
An interesting point of view is to investigate those automata~$(A, \Sigma, \delta)$ compelling all the Mealy automata~$(A, \Sigma, \delta,\rho)$ to generate a finite semigroup. A complete characterization of the latter in term of \emph{limitary cycle} given in~\cite{anto} (see also~\cite{russ}) provides a simple effective criterion for finiteness. An automaton is \emph{with limitary cycle} whenever every state~$x\in A$ accessible from some \emph{cyclic} one~$y \in A$ (that is, there exists a nontrivial word~$w\in\Sigma^*$ satisfying~$\delta_w(y)=y$) is \emph{without branch} (that is, $\delta_i(x) = \delta_j(x)$ holds for any~$(i,j)\in\Sigma^2$).
First considered in~\cite{antoberk}, the branchless condition alone is covered by~Proposition~\ref{pr:duale-finitude} and \emph{a fortiori} by Theorem~\ref{prop-red-finite}.
This criterion appears as third field of the~{\bf\small previous criteria} part of the tables. 

\subsubsection*{Maltcev's criterion}
Let $S$ be a finite semigroup. Define the Cayley
machine~$C(S)$ (\resp the dual\footnotemark[2] Cayley
machine~$C^*(S)$) to be the Mealy automaton with stateset $S$,
alphabet $S$, and the following transitions: $\forall x,y \in S$, 
\[
C(S)~: \quad  x
\stackrel{y | xy}{\longrightarrow} xy, \qquad C^*(S) ~: \quad  x
\stackrel{y | yx}{\longrightarrow} xy \:.
\]
\footnotetext[2]{It should be emphasized that
  the current term~\emph{dual} for a Cayley machine is not consistent
  with the widely used term~\emph{dual} for a Mealy automaton.}

According to~\cite{mal} (see also~\cite{min,cain}), for every finite
semigroup~$S$, the semigroup generated by~$C(S)$ (\resp by~$C^*(S)$)
is finite if and only if $S$ is $\mathcal{H}$-trivial (\resp $S$ is
$\mathcal{H}$-trivial and does not contain non-trivial right zero
subsemigroups). 
This can be viewed as an effective finiteness criterion for
those Mealy automata whose isomorphism class intersects the
special class of Cayley machines (\resp dual Cayley machines) and
their possible inverses (which justifies the
symbol~${\tiny\pm}$ in the tables). These two criteria coming from semigroup theory
compose the last quarter of the~{\bf\small previous criteria} part of the tables. 

\subsection{New criteria}
\noindent The first criterion of the~{\bf\small new criteria} part is the $\mz\dz$-triviality from Theorem~\ref{prop-red-finite}.
Next, the criterion Cycles corresponds to Corollary~\ref{cor:jir} which ensures that every automaton from~$\Cijir$ and~$\Cjir$ generates an infinite group.
The last two criteria are ``relative criteria'' --- which vindicates the
symbol~+ --- allowing in good cases to reduce or transpose the
finiteness question to smaller and/or simpler automata. The
criterion~+Sum follows from the easy observation: provided that a Mealy
automaton decomposes into a sum of (smaller) Mealy automata,
it generates an infinite semigroup whenever one sum component does so.  
Finally, the criterion~+Dual follows from~Proposition~\ref{pr:duale-finitude}.

\vbox{
\medbreak\noindent As a simple illustration, let us consider the Mealy
automaton~$\aut{C}$ below on the left. None of the previously known
criteria is suitable to detect the infiniteness
of~$\grEng{\aut{C}}$. Now, the dual~$\dual{\aut{C}}$ happens to be a
sum whose $2$-state component is (isomorphic to) the
dual~$\dual{\aut{B}}$ of the baby~Ale\v{s}in automaton~$\aut{B}$ (see
Table~\ref{tbl-examples}), which turns out to be level-transitive.  

\begin{center}
\SmallPicture\VCDraw{%
\begin{VCPicture}{(-2,-4.9)(22,1.2)}
\State[a]{(0,0)}{A} \State[b]{(5,0)}{B} \State[c]{(2.5,-4.33)}{C}
\ArcR[.7]{A}{C}{\StackTwoLabels{\IOL{0}{1}}{\IOL{1}{0}}}
\EdgeR{B}{A}{\IOL{0}{0}}
\LoopE[.2]{B}{\IOL{1}{1}}
\LoopW[.2]{A}{\IOL{2}{2}}
\ArcR[.3]{C}{B}{\StackTwoLabels{\IOL{0}{0}}{\IOL{2}{2}}}
\ArcR[.3]{B}{C}{\IOL{2}{2}}
\ArcR[.7]{C}{A}{\IOL{1}{1}}
	\HideState
	\State[]{(6,-2)}{X} \State[]{(8,-2)}{Y}
	\EdgeL[.5]{X}{Y}{\dz}
	\EdgeL{Y}{X}{}
	\ShowState
\State[0]{(9.5,-3)}{A0} \State[1]{(12.5,-3)}{A1} \State[2]{(11,-1)}{A2}
\LoopN[.15]{A2}{\StackThreeLabels{\IOL{a}{a}}{\IOL{b}{c}}{\IOL{c}{b}}}
\LoopW[.8]{A0}{\StackTwoLabels{\IOL{b}{a}}{\IOL{c}{b}}}
\LoopE[.8]{A1}{\StackTwoLabels{\IOL{b}{b}}{\IOL{c}{a}}}
\ArcR{A0}{A1}{\IOL{a}{c}}
\ArcR{A1}{A0}{\IOL{a}{c}}
	\HideState
	\State[]{(14.5,-3)}{X} \State[]{(16.5,-3)}{Y}
	\EdgeL[.5]{X}{Y}{\dz}
	\EdgeL{Y}{X}{}
	\ShowState
\State[a]{(17.5,-2)}{AA} \State[b]{(20.5,-2)}{BB} \State[c]{(19,-4.33)}{CC}
\ArcR{AA}{CC}{\StackTwoLabels{\IOL{0}{1}}{\IOL{1}{0}}}
\EdgeR{BB}{AA}{\IOL{0}{0}}
\LoopE[.8]{BB}{\IOL{1}{1}}
\EdgeR{CC}{BB}{\IOL{0}{0}}
\ArcR{CC}{AA}{\IOL{1}{1}}
\put(-0.5,-2){\scalebox{2}{\makebox(0,0){$\aut{C}$}}}
\put(20.5,-4.5){\scalebox{2}{\makebox(0,0){$\aut{B}$}}}
\end{VCPicture}}
\end{center}

\noindent In this way, the isomorphism class of~$\aut{B}$ contributes
for~one in the Level-transitive row only, those of~$\dual{\aut{B}}$
and~$\dual{\aut{C}}$ both contribute for~one in the respective +Dual
rows only and finally that of~$\aut{C}$ contributes for~one in the
+Sum row only. 
}

\begin{table}[ht]\label{gag2x3}
\centering
\caption{Results of experimentations on 2-letter 3-state Mealy automata.}
{\begin{tabular}{|c|lE>{\centering }m{\col}|>{\centering }m{\col}|>{\centering }m{\col}|>{\centering }
m{\col}|>{\centering }m{\col}|>{\centering }m{\col}|>{\centering }m{\col}E>{\centering }m{\col}|}
\cline{3-6}
\multicolumn{2}{c|}{}&\multicolumn{4}{c|}{invertible}\tabularnewline
\hline
\multicolumn{2}{|cE}{$2$-letter $3$-state}
					&$\Cijir$	&$\Cji$	&$\Cjir$	&$\Cbir$	&$\Cdijir$	&$\Cdji$	&$\Cnot$	&$\Call$\tabularnewline
\multicolumn{2}{|cE}{Mealy automata}
					&14		&488		&14		&28		&14		&175		&3270		&4003\tabularnewline
\hline
\multicolumn{4}{c|}{}&\multicolumn{4}{c|}{reversible}\tabularnewline
\cline{5-8}
\multicolumn{10}{c}{\vspace*{-8pt}}\tabularnewline
\hline
\multirow{4}{*}{\rotatebox{90}{\bf\scriptsize \!\!\!\!\!\!prev. crit.}}
&Finitary				&--		&91			&--		&8		&--		&--		&50		&149\tabularnewline
&Thompson-Wielandt\,	&--		&--			&--		&18		&--		&--		&--		&18\tabularnewline
&Level-transitive
					&14		&263		&14		&2		&--		&--		&--		&293\tabularnewline
\cline{2-10}
&Sidki				&--		&35		&--		&--		&--		&--		&--		&35\tabularnewline
\cline{2-10}
&Limitary cycles		&--		&50			&--		&14		&--		&37		&218	&319\tabularnewline
\cline{2-10}
&\quad union			&14		&385		&14		&28		&--		&37		&242	&720\tabularnewline
\cline{2-10}
\hline
\hline
\multirow{5}{*}{\rotatebox{90}{\bf\scriptsize \!\!new criteria}}
&$\mz\dz$-trivial		&--		&194		&--		&26		&--		&55		&386	&661\tabularnewline
&Cycles				&14		&--			&14		&--		&--		&--		&--		&28\tabularnewline
&+Sum				&2		&28			&2		&14		&2		&59		&99		&206\tabularnewline
&+Dual				&--		&132		&14		&21		&14		& 104	&118	&403\tabularnewline
\cline{2-10}
&\quad union			&14		&202		&14		&27		&14		&159	&427	&857\tabularnewline
\hline
\multicolumn{10}{c}{\vspace*{-8pt}}\tabularnewline
\cline{2-10}
\multicolumn{1}{c|}{}
&\quad total union		&14		&466		&14		&28		&14		&159	&519		&1214\tabularnewline
\cline{2-10}
\end{tabular}}
\end{table}
\vspace*{15pt}

\begin{table}[ht]\label{gag3x2}
\centering
\caption{Results of experimentations on 3-letter 2-state Mealy automata.}
{\begin{tabular}{|c|lE>{\centering }m{\col}|>{\centering }m{\col}|>{\centering }m{\col}|>{\centering }
m{\col}|>{\centering }m{\col}|>{\centering }m{\col}|>{\centering }m{\col}E>{\centering }m{\col}|}
\cline{3-6}
\multicolumn{2}{c|}{}&\multicolumn{4}{c|}{invertible}\tabularnewline
\hline
\multicolumn{2}{|cE}{$3$-letter $2$-state}
					&$\Cijir$	&$\Cji$	&$\Cjir$	&$\Cbir$	&$\Cdijir$	&$\Cdji$	&$\Cnot$	&$\Call$\tabularnewline
\multicolumn{2}{|cE}{Mealy automata}
					&14		&175		&14		&28		&14		&488		&3270		&4003\tabularnewline
\hline
\multicolumn{4}{c|}{}&\multicolumn{4}{c|}{reversible}\tabularnewline
\cline{5-8}
\multicolumn{10}{c}{\vspace*{-8pt}}\tabularnewline
\hline
\multirow{4}{*}{\rotatebox{90}{\bf\scriptsize \!\!\!\!\!\!prev. crit.}}
&Finitary				&--		&11		&--		&4		&--		&--		&4		&19\tabularnewline
&Thompson-Wielandt\,	&--		&--		&--		&13		&--		&--		&--		&13\tabularnewline
&Level-transitive		&11		&84		&12		&--		&--		&--		&--		&107\tabularnewline
\cline{2-10}
&Sidki				&--		&2	&--		&--		&--		&--		&--		&2\tabularnewline
\cline{2-10}
&Limitary cycles		&--		&11		&--		&16		&--		&132	&118	&277\tabularnewline
\cline{2-10}
&\quad partial union		&11		&104	&12		&21		&--		&132	&118	&398\tabularnewline
\hline
\hline
\multirow{5}{*}{\rotatebox{90}{\bf\scriptsize \!\!new criteria}}
&$\mz\dz$-trivial		&--		&55		&--		&26		&--		&194	&386	&661\tabularnewline
&Cycles				&14		&--		&14		&--		&--		&--		&--		&28\tabularnewline
&+Sum				&--		&--		&--		&8		&--		&66		&--		&74\tabularnewline
&+Dual				&2		&69		&14		&28		&14		&395	&313	&835\tabularnewline
\cline{2-10}
&\quad partial union		&14		&75		&14		&28		&14		&466	&519	&1130\tabularnewline
\hline
\multicolumn{10}{c}{\vspace*{-8pt}}\tabularnewline
\cline{2-10}
\multicolumn{1}{c|}{}
&\quad total union		&14		&159	&14		&28		&14		&466	&519	&1214\tabularnewline
\cline{2-10}
\end{tabular}}
\end{table}
\vspace*{15pt}

\begin{table}[ht]\label{gag3x3}
\centering
\caption{Results of experimentations on 3-letter 3-state invertible or reversible automata.}
{\begin{tabular}{|c|lE>{\centering }m{\colcol}|>{\centering }m{\colcol}|>{\centering }m{\colcol}|>{\centering }
m{\colcol}|>{\centering }m{\colcol}|>{\centering }m{\colcol}E>{\centering }m{\colcolcol}|}
\cline{3-6}
\multicolumn{2}{c|}{}&\multicolumn{4}{c|}{invertible}\tabularnewline
\hline
\multicolumn{2}{|cE}{$3$-letter $3$-state}
					&$\Cijir$		&$\Cji$		&$\Cjir$		&$\Cbir$		&$\Cdijir$		&$\Cdji$		&$\Call\setminus\Cnot$\tabularnewline
\multicolumn{2}{|cE}{Mealy automata}
					&1073		&116502		&1073		&335		&1073		&116502		&236558\tabularnewline
\hline
\multicolumn{4}{c|}{}&\multicolumn{4}{c|}{reversible}\tabularnewline
\cline{5-8}
\multicolumn{9}{c}{\vspace*{-8pt}}\tabularnewline
\hline
\multirow{7}{*}{\rotatebox{90}{\bf\scriptsize \!\!previous criteria}}
&Finitary				&--			&898		&--			&17			&--			&--			&915\tabularnewline
&Thompson-Wielandt\,	&--			&--			&--			&164		&--			&--			&164\tabularnewline
&Level-transitive
					&996		&71748		&612		&12			&--			&--			&73368\tabularnewline
\cline{2-9}	
&Sidki				&--			&614		&--			&--			&--			&--			&614\tabularnewline
\cline{2-9}	
&Limitary cycles		&--			&627		&--			&68			&--			&3415		&4110\tabularnewline
\cline{2-9}	
&Cayley${}^\pm$		&1			&1			&1			&--			&--			&1			&4\tabularnewline
&Dual Cayley${}^\pm$	&1	 		&--			&1			&1			&--			&--			&3\tabularnewline
\cline{2-9}	
&\quad union			&996 		&73494	&612		&204		&--			&3415		&78721\tabularnewline
\hline
\hline
\multirow{5}{*}{\rotatebox{90}{\bf\scriptsize \!\!new criteria}}
&$\mz\dz$-trivial		&--			&5928		&--			&187		&--			&5928		&12043\tabularnewline
&Cycles				&1073		&--			&1073		&--			&--			&--			&2146\tabularnewline
&+Sum				&76			&736		&76			&109		&76			&9985		&11058\tabularnewline
&+Dual				&76			&11077		&1073		&228		&1073		&73725	&87252\tabularnewline

\cline{2-9}	
&\quad union			&1073		&12811		&1073		&293		&1073		&84601	&100924\tabularnewline
																				\hline
\multicolumn{9}{c}{\vspace*{-8pt}}\tabularnewline
\cline{2-9}	
\multicolumn{1}{c|}{}
&\quad total union		&1073		&84601	&1073		&316		&1073		&84601	&172737\tabularnewline
																				\cline{2-9}	
\end{tabular}}
\end{table}
\vspace*{10pt}

\section{Conclusion}
In this paper, 
we have emphasized the interest of the duality of Mealy automata
for the finiteness problem.
Our new approaches enable to treat a much larger number of Mealy
automata as before, see Section~\ref{sec-experimentations}. We also completely
settle the case of non-bireversible IR-automata (they generate
infinite groups). On the downside, 
the decidability of the finiteness problem remains open. However, we
believe that the characterization in Theorem~\ref{th:cycles_bornes}
could lead to a decision procedure for bireversible automata. Indeed,
experimentations show that 
the cycle-lengths stay almost constant for known finite
groups and increase extremely fast for known infinite groups.

\newpage
\bibliographystyle{plain}
\bibliography{./finiteness}

\def\cprime{$'$}
\begin{thebibliography}{10}

\bibitem{aleshin}
S.~V. Ale{\v{s}}in.
\newblock Finite automata and the {B}urnside problem for periodic groups.
\newblock {\em Mat. Zametki}, 11:319--328, 1972.

\bibitem{anto}
A.~S. Antonenko.
\newblock On transition functions of {M}ealy automata of finite growth.
\newblock {\em Matematychni Studii.}, 29(1):3--17, 2008.

\bibitem{antoberk}
A.~S. Antonenko and Berkovich~E. L.
\newblock Groups and semigroups defined by some classes of {M}ealy automata.
\newblock {\em Acta Cybernetica}, 18(1):23--46, 2007.

\bibitem{FR}
L.~Bartholdi.
\newblock {\em $\FR$ Functionally recursive groups, Self-similar groups --- a
  GAP package, Version 1.1.3}, 2010.

\bibitem{bgn}
L.~Bartholdi, R.~I. Grigorchuk, and V.~Nekrashevych.
\newblock From fractal groups to fractal sets.
\newblock In {\em Fractals in {G}raz 2001}, Trends Math., pages 25--118.
  Birkh\"auser, 2003.

\bibitem{bgs}
L.~Bartholdi, R.~I. Grigorchuk, and Z.~{\v{S}}uni{\'k}.
\newblock Branch groups.
\newblock In {\em Handbook of algebra, {V}ol. 3}, pages 989--1112.
  North-Holland, 2003.

\bibitem{brs}
L.~Bartholdi, I.~I. Reznykov, and V.~I. Sushchanski{\u\i}.
\newblock The smallest {M}ealy automaton of intermediate growth.
\newblock {\em J. Algebra}, 295(2):387--414, 2006.

\bibitem{bs}
L.~Bartholdi and P.~V. Silva.
\newblock Groups defined by automata.
\newblock 2010.
\newblock "arXiv:cs.FL/1012.1531".

\bibitem{clas32}
I.~Bondarenko, R.~I. Grigorchuk, R.~Kravchenko, Y.~Muntyan, V.~Nekrashevych,
  D.~Savchuk, and Z.~{\v{S}}uni{\'c}.
\newblock On classification of groups generated by 3-state automata over a
  2-letter alphabet.
\newblock {\em Algebra Discrete Math.}, (1):1--163, 2008.

\bibitem{sidkiconjugacy}
I.~V. Bondarenko, Bondarenko~N. V., Sidki~S. N., and Zapata~F. R.
\newblock On the conjugacy problem for finite-state automorphisms of regular
  rooted trees.
\newblock arXiv:math.GR/1011.2227.

\bibitem{BM}
M.~Burger and S.~Mozes.
\newblock Groups acting on trees: from local to global structure.
\newblock {\em Inst. Hautes \'Etudes Sci. Publ. Math.}, (92):113--150, 2000.

\bibitem{cain}
A.~J. Cain.
\newblock Automaton semigroups.
\newblock {\em Theor. Comput. Sci.}, 410:5022--5038, 2009.

\bibitem{GAP4}
The GAP~Group.
\newblock {\em {GAP -- Groups, Algorithms, and Programming, Version 4.4.12}},
  2008.

\bibitem{square}
Y.~Glasner and S.~Mozes.
\newblock {Automata and Square Complexes}.
\newblock {\em Geom. Dedicata}, 111(1):43--64, 2005.

\bibitem{growth}
R.~Grigorchuk.
\newblock Semigroups with cancellations of degree growth.
\newblock {\em Mat. Zametki}, 43:305--319, 1988.

\bibitem{grigorchuk1}
R.~I. Grigorchuk.
\newblock On {B}urnside's problem on periodic groups.
\newblock {\em Funktsional. Anal. i Prilozhen.}, 14(1):53--54, 1980.

\bibitem{grigorchukMilnor}
R.~I. Grigorchuk.
\newblock On the {M}ilnor problem of group growth.
\newblock {\em Dokl. Akad. Nauk SSSR}, 271(1):30--33, 1983.

\bibitem{gns}
R.~I. Grigorchuk, V.~V. Nekrashevich, and V.~I. Sushchanski{\u\i}.
\newblock Automata, dynamical systems, and groups.
\newblock {\em Tr. Mat. Inst. Steklova}, 231:134--214, 2000.

\bibitem{gsu}
R.~I. Grigorchuk and Z.~{\v{S}}uni{\'c}.
\newblock Self-similarity and branching in group theory.
\newblock In {\em Groups {S}t. {A}ndrews 2005. {V}ol. 1}, volume 339 of {\em
  London Math. Soc. Lecture Note Ser.}, pages 36--95. Cambridge Univ. Press,
  2007.

\bibitem{mns}
O.~Macedo{\'n}ska, V.~Nekrashevych, and V.~I. Sushchanski{\u\i}.
\newblock Commensurators of groups and reversible automata.
\newblock {\em Dopov. Nats. Akad. Nauk Ukr. Mat. Prirodozn. Tekh. Nauki},
  (12):36--39, 2000.

\bibitem{mal}
V.~Maltcev.
\newblock Cayley automaton semigroups.
\newblock {\em Internat. J. Algebra Comput.}, 19(1):79--95, 2009.

\bibitem{min}
A.~Mintz.
\newblock On the {C}ayley semigroup of a finite aperiodic semigroup.
\newblock {\em Internat. J. Algebra Comput.}, 19(6):723--746, 2009.

\bibitem{sav}
Y.~Muntyan and D.~Savchuk.
\newblock {\em $\SK$ Automata Groups --- a GAP package, Version 1.1.4.1}, 2008.

\bibitem{nek}
V.~Nekrashevych.
\newblock {\em Self-similar groups}, volume 117 of {\em Mathematical Surveys
  and Monographs}.
\newblock American Mathematical Society, Providence, RI, 2005.

\bibitem{russ}
A.~Russyev.
\newblock Finite groups as groups of automata with no cycles with exit.
\newblock {\em Algebra and Discrete Mathematics}, 9(1):86--102, 2010.

\bibitem{sidki}
S.~N. Sidki.
\newblock Automorphisms of one-rooted trees: growth, circuit structure, and
  acyclicity.
\newblock {\em J. Math. Sci. (New York)}, 100(1):1925--1943, 2000.
\newblock Algebra, 12.

\bibitem{sst}
P.~V. Silva and B.~Steinberg.
\newblock On a class of automata groups generalizing lamplighter groups.
\newblock {\em Internat. J. Algebra Comput.}, 15(5-6):1213--1234, 2005.

\bibitem{svv}
B.~Steinberg, M.~Vorobets, and Y.~Vorobets.
\newblock Automata over a binary alphabet generating free groups of even rank.
\newblock {\em Internat. J. Algebra Comput.}, 2006.
\newblock "arXiv:math.GR/0610033".

\bibitem{conjugacy}
Z.~{\v{S}}uni{\'k} and E.~Ventura.
\newblock The conjugacy problem is not solvable in automaton groups.
\newblock 2010.
\newblock arXiv:math.GR/1010.1993.

\bibitem{Vl}
M.~Vaughan-Lee.
\newblock {\em The restricted {B}urnside problem}, volume~8 of {\em London
  Mathematical Society Monographs. New Series}.
\newblock Oxford University Press, 1993.

\bibitem{Ze1}
E.~I. Zel{\cprime}manov.
\newblock Solution of the restricted {B}urnside problem for groups of odd
  exponent.
\newblock {\em Izv. AN SSSR Math+}, 54(1):42--59, 221, 1990.

\bibitem{Ze2}
E.~I. Zel{\cprime}manov.
\newblock Solution of the restricted {B}urnside problem for {$2$}-groups.
\newblock {\em Mat. Sb.}, 182(4):568--592, 1991.

\end{thebibliography}

\end{document}